\documentclass[12pt, draftclsnofoot, onecolumn]{IEEEtran}
\usepackage{amsfonts}
\usepackage{graphicx}
\usepackage{color}
\usepackage{amsmath,amsfonts,amssymb,amsthm,epsfig,epstopdf,url,array}
\usepackage{url,textcomp}
\usepackage{authblk}
\usepackage{algorithm}
\usepackage{algpseudocode}
\usepackage{varwidth}
\usepackage{caption,subcaption}
\usepackage{cite}
\newcommand{\bs}{\boldsymbol}
\newtheorem{theorem}{Theorem}

\makeatletter

\newcommand{\Rmnum}[1]{\expandafter\@slowromancap\romannumeral #1@}
\makeatother
\begin{document}
\title{Outage Performance and Optimal Design of MIMO-NOMA Enhanced Small Cell Networks With Imperfect Channel-State Information}
\author{
        Zheng~Shi,
        Hong~Wang,
        Yaru~Fu,
        Guanghua~Yang,
        Shaodan~Ma,
        and Xinrong~Ye
\thanks{Zheng Shi and Guanghua Yang are with the School of Intelligent Systems Science and Engineering, Jinan University, Zhuhai 519070, China, and also with the State Key Laboratory of Internet of Things for Smart City, University of Macau, Macao 999078, China
(e-mails: zhengshi@jnu.edu.cn, ghyang@jnu.edu.cn).}  
\thanks{Hong Wang is with the School of Communication and Information Engineering, Nanjing University of Posts and Telecommunications, Nanjing 210003, China, and also with the National Mobile Communications Research Laboratory, Southeast University, Nanjing 210096, China (e-mail: wanghong@njupt.edu.cn).}
\thanks{Yaru Fu is with the School of Science and Technology, The Open University of Hong Kong, Hong Kong SAR, China (e-mail: yfu@ouhk.edu.hk).}
\thanks{Shaodan Ma is with the State Key Laboratory of Internet of Things for Smart City and the Department of Electrical and Computer Engineering, University of Macau, Macao 999078, China (e-mail: shaodanma@um.edu.mo).}
\thanks{Xinrong~Ye is with the School of Physics and Electronic Information, Anhui Normal University, Wuhu 241002, China (e-mail: shuchong@mail.ahnu.edu.cn).}
}
\maketitle
\begin{abstract}
This paper focuses on boosting the performance of small cell networks (SCNs) by integrating multiple-input multiple-output (MIMO) and non-orthogonal multiple access (NOMA) in consideration of imperfect channel-state information (CSI). The estimation error and the spatial randomness of base stations (BSs) are characterized by using Kronecker model and Poisson point process (PPP), respectively. The outage probabilities of MIMO-NOMA enhanced SCNs are first derived in closed-form by taking into account two grouping policies, including random grouping and distance-based grouping. It is revealed that the average outage probabilities are irrelevant to the intensity of BSs in the interference-limited regime, while the outage performance deteriorates if the intensity is sufficiently low. Besides, as the channel uncertainty lessens, the asymptotic analyses manifest that the target rates must be restricted up to a bound to achieve an arbitrarily low outage probability in the absence of the inter-cell interference. Moreover, highly correlated estimation error ameliorates the outage performance under a low quality of CSI, otherwise it behaves oppositely. Afterwards, the goodput is maximized by choosing appropriate precoding matrix, receiver filters and transmission rates. In the end, the numerical results verify our analysis and corroborate the superiority of our proposed algorithm.
\end{abstract}
\begin{IEEEkeywords}
Imperfect channel-state information (CSI), multiple-input multiple-output (MIMO), non-orthogonal multiple access (NOMA), small cell networks (SCNs).
\end{IEEEkeywords}
\IEEEpeerreviewmaketitle
\section{Introduction}\label{sec:int}
\IEEEPARstart{W}{ith} the surge of wireless data traffic and number of mobile devices, it has become increasingly imperative for 5G cellular networks to offer an improved spectrum efficiency as well as massive connectivity\cite{al2018small,sun2020towards}. As per Cisco visual networking index (VNI)\cite{cisco2018cisco},
5G speeds are estimated to be 13 times higher than the average mobile connection by 2023, and the number of mobile devices is forecasted to grow from 8.8 billion in 2018 to 13.1 billion by 2023. 
To confront these unprecedented challenges, small cell has been recognized as one of the promising solutions for 5G to deliver ever-increasing network capacity and fulfill the accommodation of a large number of users\cite{de2019key}. However, aggressive frequency reuse will incur severe interference for ultra-dense small cell deployment\cite{han2018downlink}. To remedy this issue, the fusion of small cell networks (SCNs) and other key enabling 5G technologies has received enormous interest\cite{han2018downlink,tan2017spectral,lei2019safeguarding}. Specifically, the combination of multiple-input multiple-output (MIMO) and non-orthogonal multiple access (NOMA) is anticipated to considerably improve spectral efficiency and support more concurrent connections for interference-infested SCNs\cite{wei2019performance,nasser2019interference,liu2020pilot,chen2020ambient}.
\subsection{Related Works and Motivations}
Since MIMO attains additional spatial diversity gain by deploying multiple antennas and NOMA exploits extra multi-user diversity gain by taking advantage of the channel discrepancy, it has been vastly reported that MIMO-NOMA achieves superior performance over its orthogonal counterpart, i.e., MIMO-orthogonal multiple access (OMA), in terms of both the reliability and the spectral efficiency \cite{ding2016application,ding2016general,wang2019power,fu2020zero}. Nevertheless, the application of MIMO-NOMA to SCNs still remains in its infancy. Notwithstanding, a wide range of investigations on single-input single-out (SISO)-NOMA enhanced SCNs have been conducted in the literature \cite{liu2017non,han2018downlink,zhang2019energy}. To mention only a few, by assuming single antenna at both small cell base stations (BSs) and users, the coverage probability and achievable rate of non-uniform SCNs were examined in \cite{han2018downlink}. By considering large-scale networks, it was studied in \cite{liu2017non} that the coverage probability of small cells depends heavily on the prescribed transmit rates and power sharing coefficients. In addition, a novel energy-efficient algorithm was proposed for NOMA aided heterogeneous SCNs with energy harvesting by assuming perfect channel-state information (CSI) in \cite{zhang2019energy}. Unfortunately, prior works seldom touched MIMO-NOMA enhanced SCNs except for \cite{wei2019performance,nasser2019interference}. In particular, the authors in \cite{wei2019performance} substantiated the superior performance of MIMO-NOMA over MIMO-OMA in multi-cell systems in terms of the ergodic sum-rate. Besides, a new interference mitigation and power allocation scheme was invented for downlink MIMO-NOMA assisted heterogeneous SCNs \cite{nasser2019interference}. However, the existing literature commonly assume perfect CSI at the receivers without the practical consideration of imperfect CSI in MIMO-NOMA enhanced SCNs.

Unfortunately, there are only few available methodologies that can be leveraged to undertake the performance investigations into MIMO-NOMA enhanced SCNs with imperfect CSI. To be specific, K. He \emph{et al.} developed a novel compressed sensing-aided MIMO-NOMA scheme by allowing for imperfect CSI in \cite{he2019novel}, where an upper bound for the performance of the proposed scheme was presented by using restricted isometry property. In \cite{silva2020noma}, the worst-case Gaussian approximation was employed to derive the total sum rate, with which the significant spectral and energy efficiency gains over the existing multi-way relay network were demonstrated. A semi-closed expression was provided for the outage performance evaluation of multiple-input single-output (MISO)-NOMA in view of imperfect CSI in \cite{chen2017low}, in which the assumption of correlated CSI was made. In \cite{cui2018outage}, Gaussian approximation was used to derive the outage probability of MIMO-NOMA systems under imperfect CSI. For the sake of analytical tractability, the uncorrelated estimation errors were assumed and only power allocation was involved into precoding without accounting for beamforming design in \cite{cui2018outage}. However, the developed methodologies in \cite{he2019novel,silva2020noma,chen2017low,cui2018outage} are based on either approximations or simulations, and they cannot be directly applied into the scenario of large-scale SCNs as well. Moreover, although a large-scale dense network was considered in \cite{chen2020performance}, the proposed analytical approach is specific to the multiple-input single-output (MISO)-NOMA system. In contrast, the performance analysis of the MIMO-NOMA system will introduce the difficulty of handling random matrix variate.


%
%
%
%
%

This motivates us to trigger the performance investigations of MIMO-NOMA enhanced large-scale SCNs with imperfect CSI at both the transmit and receive sides. More specifically, Poisson point process (PPP) and Kronecker model are utilized to capture the randomness of small cell BSs' positions and the impact of estimation error correlation, respectively. By employing stochastic geometry and integral transform, the exact outage probabilities are derived in closed-form and insightful results are extracted by conducting asymptotic outage analyses. With the analytical results, the optimal design of MIMO-NOMA enhanced SCNs is empowered. To the best of our knowledge, this is the first work that copes with MIMO-NOMA enhanced SCNs under imperfect CSI. Additionally, in contrast to \cite{ding2016general,ding2016application,cui2018outage}, a more flexible MIMO-NOMA transmission framework with variable number of clusters/groups is established in this paper. Furthermore, it is worth noting that the Gaussian approximation used in \cite{cui2018outage} cannot be applied in this paper because of the nonexistence of mean and variance of the aggregated inter-cell interference.
\subsection{Contributions}
To recapitulate, five main contributions of this paper are enumerated below.
\begin{enumerate}
  \item Exact compact expressions are derived for the outage probabilities of MIMO-NOMA enhanced SCNs, where two NOMA grouping policies are considered, namely random NOMA grouping and distanced-based NOMA grouping.
  \item The exact outage analyses show that the average outage probability is irrespective of the intensity of BSs in the interference-limited regime. Whereas, a counterintuitive behavior that the intensity of BSs below a threshold is harmful to the average outage performance is observed.
  \item As the channel uncertainty vanishes, the target transmission rates should be upper bounded to guarantee an arbitrarily low outage probability in the absence of the inter-cell interference.
  \item The correlation coefficient has a positive effect on the outage probability under the high quality of CSI. Conversely, the correlation coefficient impairs the outage performance if the quality of CSI is low.
  \item The goodput maximization of the MIMO-NOMA system is considered by jointly devising the precoding matrix, receiver filters and transmission rates. Numerical results exhibit that the proposed scheme outperforms other benchmarking schemes especially under significant difference between channel gains, including MIMO-OMA with precoding, MIMO-OMA without precoding and MIMO-NOMA without precoding.
\end{enumerate}

\subsection{Paper Organization and Notation}
The remaining of the paper is organized as follows. Section \ref{sec:sys_mod} delineates the model of MIMO-NOMA enhanced SCNs. The exact and asymptotic outage analyses are carried out in Sections \ref{sec:out} and \ref{sec:asy}, respectively. Section \ref{sec:opt} devotes itself to the optimal system design. Numerical results are then presented for validations in Section \ref{label:num}. Section \ref{sec:con} finally concludes this paper.

\emph{Notation}: The following notations shall be adopted throughout this paper. The uppercase and lowercase boldface letters denote matrices and vectors, respectively. ${\bf X}^{\mathrm{T}}$, ${\bf X}^{\mathrm{H}}$, ${\bf X}^{-1}$ and ${\bf X}^{1/2}$ stands for the transpose, conjugate transpose, matrix inverse and Hermitian square root of matrix ${\bf X}$, respectively. ${\rm{Tr}}(\cdot)$, ${\mathrm{det}}(\cdot)$ and ${\mathrm{diag}}(\cdot)$ refer to trace, determinant and diagonalization operations, respectively. $\left\| \cdot \right\|$ represents the Euclidean/Frobenius norm of a vector/matrix. ${{\mathbb{E}}_A}\{\cdot\}$ denotes the operator of the expectation taking over random variable $A$, and the subscription $A$ is sometimes omitted without leading to ambiguity. $\mathbb C^{m\times n}$ is the set of $m\times n$-dimensional complex matrices. ${\cal CN}({\bs \mu, \bs \Sigma})$ denotes the distribution of a circularly-symmetric complex Gaussian (CSCG) random variables with mean vector $\bs \mu$ and covariance matrix $\bs \Sigma$. $\mathbf{0}_n$, $\mathbf{1}_n$ and $\mathbf{I}_n$ represent an $n \times 1$ all-zeros vector, an $n \times 1$ all-ones vector and an $n \times n$ identity matrix, respectively, and the dimensionality $n$ is sometimes dropped out for convenience. The symbol ${\rm i}=\sqrt{-1}$ refers to the imaginary unit. ${\rm Re}\{\cdot\}$ represents the real part of a complex number. $\bigcup{(\cdot)}$ stands for the union operation. $H^{m,n}_{p,q}(\left. x \right|\cdot)$ represents Fox's H-function. Any other notations are deferred to define in the place where they appear.

\section{System Model}\label{sec:sys_mod}
This paper considers a MIMO-NOMA enhanced SCNs with imperfect CSI. We assume that the positions of the small cell BSs obey a homogeneous PPP $\Phi_b$ of intensity $\lambda_b$ and the users are also distributed according to a homogenous PPP $\Phi_u$ of intensity $\lambda_u$. Each user is associated with its closest BS\cite{salehi2019accuracy,park2012outage}. The BSs and the users are equipped with $M$ and $N$ antennas, respectively. By considering heavily loaded networks, i.e., $\lambda_u \gg \lambda_b$, it is reasonable to assume that there are at least $2K$ users in a typical Voronoi cell and $K \le \min\{M,N\}$\cite{haenggi2017user,yu2013downlink}. To accommodate $K$ pairs of NOMA users over MIMO channels within the coverage of each cell, $2K$ users can be randomly chosen in the Voronoi cell. Accordingly, the NOMA users are uniformly distributed within the Voronoi cell \cite{salehi2019accuracy}. In \cite{yu2012dynamic,haenggi2017user,wang2017meta}, the distance between a user and its serving BS approximately follows a Rayleigh distribution with probability density function (PDF)
\begin{equation}\label{eqn:pdf_d_approx}
  f_d(x) \approx 2c\lambda_b\pi x e^{-c\lambda_b\pi x^2},\, x\ge 0,
\end{equation}
where $c=5/4$. Hereinafter, the system model is introduced by splitting into three subsections, including transmission model, imperfect CSI and received SINR.
\subsection{Transmission Model}
In this paper, the NOMA scheme is implemented in the power-domain. By following the similar MIMO-NOMA framework to \cite{ding2016general}, the $K \times 1$ information-bearing vector ${{\bf{s}}}$ sent by a typical BS $z$ is expressed as
\begin{equation}\label{eqn:tran_signal}
{{\bf{s}}} = \left(
{{\beta _{1}}{s_{1}} + {\beta _{\tilde 1}}{s_{\tilde 1}}},\cdots,{{\beta _{k}}{s_{k}} + {\beta _{\tilde k}}{s_{\tilde k}}},\cdots,{{\beta _{K}}{s_{K}} + {\beta _{\tilde K}}{s_{\tilde K}}}
 \right)^{\rm T},
\end{equation}
where ${s_{k}}$ and ${s_{\tilde k}}$ are the signals intended for the near user and the far user of the $k$-th pair, respectively, and they are normalized to unit power, i.e., ${\mathbb E}\{|s_k|^2\} = {\mathbb E}\{|s_{k'}|^2\} \triangleq 1$, ${\beta _{k}}$ and ${\beta _{\tilde k}}$ correspond to the power allocation coefficients with ${\beta _{k}}^2 + {\beta _{\tilde k}}^2 = 1$. 
The received signal at user $k$ (or $\tilde k$) can be expressed as
\begin{align}\label{eqn:users_near_signal_rec}
{\bf{y}}_k &= {\sqrt{P\ell \left( {\left\| {z - {o_k}} \right\|} \right)}}{{\bf{H}}}_{zk}{{\bf{V}}}{{\bf{s}}} 
+ {\sum\nolimits _{x \in {\Phi _b}\backslash \left\{ z \right\}}}\sqrt {\rho_I\ell \left( {\left\| {x - {o_k}} \right\|} \right)} {{\bf{1}}_N}{w_x}  + {\bf{n}}_k,
\end{align}
where ${o_k}$ stands for the position of user $k$, $P$ denotes the transmit power of each data stream, ${{\bf{H}}}_{zk} \in {\mathbb C}^{N \times M}$ represents the channel response matrix from the BS $z$ to user $k$, $x \in {\Phi _b}\backslash \left\{ z \right\}$ stands for the interfering BSs (the co-channel BSs excluding $z$) and ${{{\bf{V}}}}=\left( {{{\bf{v}}_{1}}, \cdots ,{{\bf{v}}_{K}}} \right)$ is the normalized $M \times K$ precoding matrix, i.e., $\left\| {\bf v}_k \right\|=1$ for $k =1,\cdots,K$, the co-channel interference is modeled according to the classical shot noise model due to the unavailable CSI between interfering BS $x \in {\Phi _b}$ and user $k$ \cite{ding2016general}, $\ell \left( d \right) = {d^{ -\alpha }}$ captures path loss and $\alpha$ stands for the path loss exponent, ${{ w}_{{x}}}$ denotes the normalized signal sent by the interferer, $\rho_I$ corresponds to the interference power, ${\bf n}_k$ denotes the complex-valued additive Gaussian white noise with variance ${\sigma^2}{\bf I}_M$.
\subsection{Imperfect CSI}
We assume that only partial CSI is available at both the transmitter and the receiver. More precisely, the imperfect channel model for ${\bf H}_{zk}$ is formulated as \cite{park2012outage}
\begin{equation}\label{eqn:imperfect_channel_model}
 {\bf H}_{zk} = \hat {\bf H}_{zk} + {\bf E}_{zk}.
\end{equation}
where $\hat {\bf H}_{zk}$ is the channel state known to BS $z$ and user $k$, and ${\bf E}_{zk}$ is the channel estimation error. Furthermore, a frequently used error model is adopted for the estimation error ${\bf E}_{zk}$, i.e., Kronecker correlation model. In particular, the channel uncertainty matrix ${\bf E}_{zk}$ is given by
\begin{equation}\label{eqn:imper_csi_konecker}
{\bf{E}}_{zk} = {{\bf{R }}_{rk}}^{1/2}{ {\bf{E}}_w}{{\bf{R }}_{tk}}^{1/2}.
\end{equation}
where ${\rm vec}({\bf E}_w) \sim {\cal CN}({\bf 0},{\sigma_h}^2{\bf I})$. It is worth noting that error correlations take place because of the transmit and receive antenna structure. Accordingly, ${\bf{E}}_{zk}$ is a CSCG random matrix, i.e., ${\rm vec}({\bf E}_{zk}) \sim {\cal CN}({\bf 0},{\sigma_h}^2({{\bf R}_{tk}^{\rm T}\otimes{\bf R}_{rk}}))$. It is assumed that the partial channel informations $\hat{\bf H}_{zk}$, ${\bf R}_{tk}$ and ${\bf R}_{rk}$ are known to BS $z$ and user $k$. To characterize the quality of CSI, we introduce the channel K factor defined as \cite{park2012outage} 
\begin{equation}\label{eqn:kch}
  \mathcal K_{zk} = \frac{\| {\hat {\bf H}_{zk}} \|_F^2}{\mathbb E\{\left\| {{\bf E}_{zk}} \right\|_F^2\}} 
  = \frac{\| {\hat {\bf H}_{zk}} \|_F^2}{{\sigma_h}^2{\rm Tr}{({{\bf R}_{tk}})}{\rm Tr}({\bf R}_{rk})},
\end{equation}
where $\mathcal K_{zk}$ represents the power ratio of the known channel part to the unknown part, which quantifies the quality of CSI. 
In addition, the similar definition also applies to the imperfect CSI associated with user $\tilde k$.
%
\subsection{Received SINR}
Similarly to \cite{ding2016general}, the superimposed signals at the receive sides are successively decoded in a descending order of link distances. Specifically, with regard to the near user $k$, by applying receiver filter ${\bf u}_k$ to \eqref{eqn:users_near_signal_rec}, the desired data stream $k$ can be written as
\begin{align}\label{eqn:y_m_wr}
{{\hat s}_k}=&\sqrt {P\ell \left( {{d_k}} \right)} {{\bf{u}}_k}^{\rm{H}}\left( {{{{\bf{\hat H}}}_{zk}} + {{\bf{E}}_{zk}}} \right){{\bf{v}}_k}\left( {{\beta _k}{s_k} + {\beta _{\tilde k}}{s_{\tilde k}}} \right) 
 \notag\\
&+\underbrace {\sqrt {P\ell \left( {{d_k}} \right)} \sum\limits_{i = 1,i \ne k}^K {{{\bf{u}}_k}^{\rm{H}}\left( {{{{\bf{\hat H}}}_{zk}} + {{\bf{E}}_{zk}}} \right){{\bf{v}}_i}\left( {{\beta _i}{s_i} + {\beta _{\tilde i}}{s_{\tilde i}}} \right)} }_{{\rm{intra - cell}}\,{\rm{interference}}}\notag\\
& + \underbrace {{\sum _{x \in {\Phi _b}\backslash \left\{ z \right\}}}\sqrt {{\rho _I}\ell \left( {\left\| {x - {o_k}} \right\|} \right)} {{\bf{u}}_k}^{\rm{H}}{{\bf{1}}_N}{w_x}}_{{\rm{inter-cell\,interference}}} + {{\bf{u}}_k}^{\rm{H}}{{\bf{n}}_k}.
\end{align}
where $\left\| {z - {o_k}} \right\| = d_k$.
By following the NOMA principle, the technique of successive interference cancellation (SIC) is adopted to cancel out the message intended to user $\tilde k$ with SINR given by 
\begin{align}\label{eqn:gamma_sinr_mdash}
{{\rm SINR} _{k \to \tilde k}} = \frac{{P\ell \left( {{d_k}} \right){{\left| {{{\bf{u}}_k}^{\rm{H}}{{{\bf{\hat H}}}_{zk}}{{\bf{v}}_k}} \right|}^2}{\beta _{\tilde k}}^2}}{{P\ell \left( {{d_k}} \right)\left( \begin{array}{c}
{\left| {{{\bf{u}}_k}^{\rm{H}}{{\bf{E}}_{zk}}{{\bf{v}}_k}} \right|^2}{\beta _{\tilde k}}^2 + {\left| {{{\bf{u}}_k}^{\rm{H}}\left( {{{{\bf{\hat H}}}_{zk}} + {{\bf{E}}_{zk}}} \right){{\bf{v}}_k}} \right|^2}{\beta _k}^2\\
 + \sum\nolimits_{i=1, i \ne k}^{K} {{{\left| {{{\bf{u}}_k}^{\rm{H}}\left( {{{{\bf{\hat H}}}_{zk}} + {{\bf{E}}_{zk}}} \right){{\bf{v}}_i}} \right|}^2}}
\end{array} \right) + {I_k} + {\sigma_k ^2}}},
\end{align}
where $\sigma_k ^2=\sigma^2\left\| {{{\bf{u}}_k}} \right\|^2$ and ${I_k}$ is the power of the inter-cell interference observed by user $k$, such that
\begin{equation}\label{eqn:interfere_near}
{I_k} = {\rho _I}{\left| {{{\bf{u}}_k}^{\rm{H}}{{\bf{1}}_N}} \right|^2}{\sum\nolimits _{x \in {\Phi _b}\backslash \left\{ z \right\}}}\ell \left( {\left\| {x - {o_k}} \right\|} \right).
\end{equation}
By subtracting the message $s_{\tilde k}$, the near user decodes its own message $s_{k}$ with SINR given by 
\begin{align}\label{eqn:sinr_mm}
{{\rm SINR} _k} = \frac{{P\ell \left( {{d_k}} \right){{\left| {{{\bf{u}}_k}^{\rm{H}}{{{\bf{\hat H}}}_{zk}}{{\bf{v}}_k}} \right|}^2}{\beta _k}^2}}{{P\ell \left( {{d_k}} \right)\left( {{{\left| {{{\bf{u}}_k}^{\rm{H}}{{\bf{E}}_{zk}}{{\bf{v}}_k}} \right|}^2} + \sum\nolimits_{i=1, i \ne k}^{K} {{{\left| {{{\bf{u}}_k}^{\rm{H}}\left( {{{{\bf{\hat H}}}_{zk}} + {{\bf{E}}_{zk}}} \right){{\bf{v}}_i}} \right|}^2}} } \right) + {I_k} + {\sigma_{ k} ^2}}}.
\end{align}
In an analogous way, the far user $\tilde k$ decodes its own message with SINR given by
\begin{equation}\label{eqn:sinr_mdashmdash}
{{\rm SINR} _{\tilde k}} = \frac{{P\ell \left( {{d_{\tilde k}}} \right){{\left| {{{\bf{u}}_{\tilde k}}^{\rm{H}}{{{\bf{\hat H}}}_{z\tilde k}}{{\bf{v}}_k}} \right|}^2}{\beta _{\tilde k}}^2}}{{P\ell \left( {{d_{\tilde k}}} \right)\left( \begin{array}{c}
{\left| {{{\bf{u}}_{\tilde k}}^{\rm{H}}{{\bf{E}}_{z\tilde k}}{{\bf{v}}_k}} \right|^2}{\beta _{\tilde k}}^2 + {\left| {{{\bf{u}}_{\tilde k}}^{\rm{H}}\left( {{{{\bf{\hat H}}}_{z\tilde k}} + {{\bf{E}}_{z\tilde k}}} \right){{\bf{v}}_k}} \right|^2}{\beta _k}^2\\
 + \sum\nolimits_{i=1,i \ne m}^K {{{\left| {{{\bf{u}}_{\tilde k}}^{\rm{H}}\left( {{{{\bf{\hat H}}}_{z\tilde k}} + {{\bf{E}}_{z\tilde k}}} \right){{\bf{v}}_i}} \right|}^2}}
\end{array} \right) + {I_{\tilde k}} + {\sigma_{\tilde k} ^2}}},
\end{equation}
where ${\sigma_{\tilde k} ^2} = \sigma^2\left\| {{{\bf{u}}_{\tilde k}}} \right\|^2$ and ${\bf u}_{\tilde k}$ refers to the detection filter applied at user $\tilde k$, $\left\| {z - {o_{\tilde k}}} \right\| = d_{\tilde k}$ and the interference term ${I_{\tilde k}}$ reads as
\begin{equation}\label{eqn:I_mdash}
{I_{\tilde k}} = {\rho _I}{\left| {{{\bf{u}}_{\tilde k}}^{\rm{H}}{{\bf{1}}_N}} \right|^2}{\sum\nolimits _{x \in {\Phi _b}\backslash \left\{ z \right\}}}\ell \left( {\left\| {x - {o_{\tilde k}}} \right\|} \right).
\end{equation}
It is worth noting that the aggregated interferences ${I_k}$ and ${I_{\tilde k}}$ are correlated due to the same interfering BSs, i.e., $x\in \Phi_b\backslash \{z\}$.
\section{Outage Probability}\label{sec:out}
Since the outage probability is the key performance metric, this section is dedicated to conducting the outage analysis for MIMO-NOMA enhanced SCNs with imperfect CSI. The outage probabilities of users $\tilde k$ and $k$ are separately derived as follows. 
%
%
%
\subsection{Outage Probability of the Far User $\tilde k$}
\subsubsection{Conditional Outage Probability}\label{sec:outkt}
The outage event at the far user $\tilde k$ happens if the channel capacity is less than the preset transmission rate $R_{\tilde k}$. Given the collection of channel information $\mathcal H = \{\hat {\bf H}_{zk},{\bf R}_{rk}, {\bf R}_{tk}, \hat {\bf H}_{z\tilde k},{\bf R}_{r\tilde k}, {\bf R}_{t\tilde k}, d_k, d_{\tilde k}, \alpha, \sigma^2, {\sigma_h}^2, \rho_I, \lambda_b\}$, the conditional outage probability of user $\tilde k$ can be expressed as
\begin{align}\label{eqn:out_mdash}
{p_{\tilde k|\mathcal H}} &= {\Pr \left( {{{\log }_2}\left( {1 + {{\rm SINR} _{\tilde k}}} \right) < {R_{\tilde k}}} \right)} 
=1- \underbrace{\Pr \left( {{{\log }_2}\left( {1 + {{\rm SINR} _{\tilde k}}} \right) \ge {R_{\tilde k}}} \right)}_{{q_{\tilde k|{\mathcal H}}}},
\end{align}
where ${{q_{\tilde k|{\mathcal H}}}}$ stands for the conditional successful probability given the statistical channel information $\mathcal H $. 
We thus proceed to derive ${{q_{\tilde k|{\mathcal H }}}}$ next. By using \eqref{eqn:sinr_mdashmdash}, it follows that
\begin{align}\label{eqn:out_kd_tilde}
&{q_{\tilde k|{\mathcal H}}} =
\Pr \left\{ \begin{array}{l}
{\beta _{\tilde k}}^2{\left| {{\chi _{\tilde k}}} \right|^2} + {\beta _k}^2{\left| {{\mu _{\tilde k}} + {\chi _{\tilde k}}} \right|^2} + \sum\limits_{i = 1,i \ne m}^K {{{\left| {{\mu _{\tilde i}} + {\chi _{\tilde i}}} \right|}^2}}
 \le \frac{{{{\left| {{\mu _{\tilde k}}} \right|}^2}{\beta _{\tilde k}}^2}}{{{2^{{R_{\tilde k}}}} - 1}} - \frac{{{{\sigma_{\tilde k} ^2}}}}{{P\ell \left( {{d_{\tilde k}}} \right)}} - \frac{{{I_{\tilde k}}}}{{P\ell \left( {{d_{\tilde k}}} \right)}}
\end{array} \right\},
\end{align}
where ${\mu _{\tilde i}} = {{\bf{u}}_{\tilde k}}^{\rm{H}}{{{\bf{\hat H}}}_{z\tilde k}}{{\bf{v}}_i}$ and ${\chi_{\tilde i}} = {{\bf{u}}_{\tilde k}}^{\rm{H}}{{\bf{E}}_{z\tilde k}}{{\bf{v}}_i}$. Then, by using the identity ${\beta _{\tilde k}}^2{\left| {{\chi_{\tilde k}}} \right|^2} + {\beta _k}^2{\left| {{\mu _{\tilde k}} + {\chi_{\tilde k}}} \right|^2} = {\left| {{\beta _k}^2{\mu _{\tilde k}} + {\chi_{\tilde k}}} \right|^2} + {\beta _k}^2{\beta _{\tilde k}}^2{\left| {{\mu _{\tilde k}}} \right|^2}$,
\eqref{eqn:out_kd_tilde} can be simplified as
\begin{align}\label{eqn:p_out_cnd_far_simp}
{q_{\tilde k|{\mathcal H}}} &= \Pr \left\{ {{{\left( {{\bs{\tilde {\chi}}} + {\bs{\tilde \nu }}} \right)}^{\rm{H}}}\left( {{\bs{\tilde {\chi}}} + {\bs{\tilde \nu }}} \right) \le {\tau _{\tilde k}} - \frac{{{I_{\tilde k}}}}{{P\ell \left( {{d_{\tilde k}}} \right)}}} \right\},
\end{align}
where ${\bs{\tilde {\chi}}} = {\left( {{{\chi}_{\tilde 1}}, \cdots ,{{\chi}_{\tilde K}}} \right)^{\rm{T}}} = {\left( {{{\bf{u}}_{\tilde k}}^{\rm{H}}{{\bf{E}}_{z\tilde k}}{\bf{V}}} \right)^{\rm{T}}}$, ${\tau _{\tilde k}} = \left( {{1}/{({{2^{{R_{\tilde k}}}} - 1})} - {\beta _k}^2} \right){\beta _{\tilde k}}^2{\left| {{\mu _{\tilde k}}} \right|^2} - {{{\sigma_{\tilde k} ^2}}}/{({P\ell \left( {{d_{\tilde k}}} \right)})}$ and ${\bs{\tilde \nu }} = {\left( {{\mu _{\tilde 1}}, \cdots ,{\mu _{\tilde{k - 1}}},{\beta _k}^2{\mu _{\tilde k}},{\mu _{\tilde{k + 1}}}, \cdots ,{\mu _{\tilde K}}} \right)^{\rm{T}}}$.
In order to guarantee ${q_{\tilde k|{\mathcal H}}}<1$ from \eqref{eqn:p_out_cnd_far_simp}, the condition ${\beta _k}^2\left( {{2^{{R_{\tilde k}}}} - 1} \right) < 1$ should be satisfied. Apparently, ${\bs{\tilde {\chi}}}$ is a CSCG random vector with mean zero and covariance matrix ${\bf{\tilde \Sigma }} = {\sigma _h^2}{{\bf{u}}_{\tilde k}}^{\rm{H}}{{\bf{R}}_{r \tilde k}}{{\bf{u}}_{\tilde k}}{\left( {{{\bf{V}}^{\rm{H}}}{{\bf{R}}_{t\tilde k}}{\bf{V}}} \right)^{\rm{T}}}$, i.e., ${\bs{\tilde {\chi}}}\sim {\cal CN}(\bf 0 ,\tilde{\bf\Sigma} )$\cite[eq.13]{park2012outage}. In analogous to characteristic function used in \cite{al2016distribution}, applying inverse Laplace transform to \eqref{eqn:p_out_cnd_far_simp} yields 
\begin{align}\label{eqn:inve_Lap_out_tilde}
q_{\tilde k|{\mathcal H}} &= {{\mathbb{E}}_{{I_{\tilde k}}}}\left\{ {\int\limits_{{{\mathbb C}^K}} {u\left( {{\tau _{\tilde k}} - \frac{{{I_{\tilde k}}}}{{P\ell \left( {{d_{\tilde k}}} \right)}}}-{{\left( {{\bf{x}} + {\bs{\tilde \nu }}} \right)}^{\rm{H}}}\left( {{\bf{x}} + {\bs{\tilde \nu }}} \right) \right){f_{{\bs{\tilde \chi}}}}\left( {\bf{x}} \right)} d{\bf{x}}} \right\}\notag\\
 &= {{\mathbb{E}}_{{I_{\tilde k}}}}\left\{ {\int\limits_{{{\mathbb C}^K}} {\frac{1}{{2\pi {\rm{i}}}}\int\limits_{\tilde c - {\rm{i}}\infty }^{\tilde c + {\rm{i}}\infty } {\frac{1}{s}{e^{s\left( {{\tau _{\tilde k}} - \frac{{{I_{\tilde k}}}}{{P\ell \left( {{d_{\tilde k}}} \right)}}}-{{\left( {{\bf{x}} + {\bs{\tilde \nu }}} \right)}^{\rm{H}}}\left( {{\bf{x}} + {\bs{\tilde \nu }}} \right)\right)}}ds} {f_{{\bs{\tilde \chi}}}}\left( {\bf{x}} \right)} d{\bf{x}}} \right\},\,\tilde c>0,
\end{align}
where the last step holds by using $u(x) = \frac{1}{{2\pi {\rm{i}}}}\int_{\tilde c - {\rm{i}}\infty }^{\tilde c + {\rm{i}}\infty } {{1}/{s}{e^{sx}}ds}$, $u(x)$ stands for the step unit function and ${f_{{\bs{\tilde \chi}}}}\left( {\bf{x}} \right)$ represents the joint PDF of ${\bs{\tilde \chi}}$. Since the joint PDF of ${\bs{\tilde \chi}}$ is ${f_{\bs{\tilde \chi}}}\left( {\bf{x}} \right) ={\exp({ - {{\bf{x}}^{\rm{H}}}{{\bf{\tilde\Sigma }}^{ - 1}}{\bf{x}}})}/{({{\pi ^K}\det  ({\bf{\tilde\Sigma }}) })}$,
\eqref{eqn:inve_Lap_out_tilde} can be further rewritten as
\begin{align}\label{eqn:p_inte_out1}
q_{\tilde k|{\mathcal H}} =& \frac{1}{{{\pi ^K}\det \left( {{\bf{\tilde \Sigma }}} \right)}}\frac{1}{{2\pi {\rm{i}}}}\int\limits_{\tilde c - {\rm{i}}\infty }^{\tilde c + {\rm{i}}\infty } {\frac{{{e^{ s{\tau _{\tilde k}}}}}}{s}{{\mathbb{E}}_{{I_{\tilde k}}}}\left\{ {{e^{-s\frac{{{I_{\tilde k}}}}{{P\ell \left( {{d_{\tilde k}}} \right)}}}}} \right\}ds}
\int\limits_{{{\mathbb C}^K}} {{e^{-s{{\left( {{\bf{x}} + {\bs{\tilde \nu }}} \right)}^{\rm{H}}}\left( {{\bf{x}} + {\bs{\tilde \nu }}} \right) - {{\bf{x}}^{\rm{H}}}{{{\bf{\tilde \Sigma }}}^{ - 1}}{\bf{x}}}}} d{\bf{x}}.
\end{align}
By using the sum of two quadratic forms \cite[eq.8.1.7]{petersen2008matrix} and the important integration \cite[eq.22]{al2016distribution}, the inner integral in \eqref{eqn:p_inte_out1} can be computed as
\begin{align}\label{eqn:identi_gassian_inte}
\int\limits_{{\mathbb C^K}} {{e^{-s{{\left( {{\bf{x}} + {\bs{\tilde \nu }}} \right)}^{\rm{H}}}\left( {{\bf{x}} + {\bs{\tilde \nu }}} \right) - {{\bf{x}}^{\rm{H}}}{{{\bf{\tilde \Sigma }}}^{ - 1}}{\bf{x}}}}} d{\bf{x}} 
&= \int\limits_{{\mathbb C^K}} {{e^{{{-\left( {{\bf{x}} + {\bf{\tilde a}}} \right)}^{\rm{H}}}\left( {s{\bf{I}} + {{{\bf{\tilde \Sigma }}}^{ - 1}}} \right)\left( {{\bf{x}} + {\bf{\tilde a}}} \right) - {{{\bs{\tilde \nu }}}^{\rm{H}}}{{\left( {\frac{1}{s}{\bf{I}} + {\bf{\tilde \Sigma }}} \right)}^{ - 1}}{\bs{\tilde \nu }}}}d{\bf{x}}} \notag\\
&= \frac{{{\pi ^K}}}{{\det \left( { s{\bf{I}} + {{{\bf{\tilde \Sigma }}}^{ - 1}}} \right)}}{e^{-{{{\bs{\tilde \nu }}}^{\rm{H}}}{{\left( {\frac{1}{s}{\bf{I}} + {\bf{\tilde \Sigma }}} \right)}^{ - 1}}{\bs{\tilde \nu }}}}.
\end{align}
Furthermore, by using the Laplace functional of ${I_{\tilde k}}$, the expectation term in the above can be obtained as  \cite[eq.5.11]{haenggi2012stochastic}
\begin{align}\label{eqn:laplacefunc_faruser}
{{\mathbb{E}}_{{I_{\tilde k}}}}\left\{ {{e^{-s\frac{{{I_{\tilde k}}}}{{P\ell \left( {{d_{\tilde k}}} \right)}}}}} \right\}&= {e^{ - \pi {\lambda _b}{\omega _{\tilde k}}{d_{\tilde k}}^2{s^{\frac{2}{\alpha }}}}},
\end{align}
where ${\omega _{\tilde k}} = \Gamma \left( {1 - \frac{2}{\alpha }} \right){\left( {\frac{{{\rho _I}}}{P}{{\left| {{{\bf{u}}_{\tilde k}}^{\rm{H}}{{\bf{1}}_K}} \right|}^2}} \right)^{\frac{2}{\alpha }}}$. Thus, by substituting \eqref{eqn:identi_gassian_inte} and \eqref{eqn:laplacefunc_faruser} into \eqref{eqn:p_inte_out1}, we obtain
\begin{align}\label{eqn:p_inte_out2}
q_{\tilde k|{\mathcal H}} 
&= \frac{1}{{\det \left( {{\bf{\tilde \Sigma }}} \right)}}
 \frac{1}{{2\pi {\rm{i}}}}\int\limits_{\tilde c - {\rm{i}}\infty }^{\tilde c + {\rm{i}}\infty } {\frac{{{e^{-{{{\bs{\tilde \nu }}}^{\rm{H}}}{{\left( {\frac{1}{s}{\bf{I}} + {\bf{\tilde \Sigma }}} \right)}^{ - 1}}{\bs{\tilde \nu }} + {\tau _{\tilde k}}s{ - \pi {\lambda _b}{\omega _{\tilde k}}{d_{\tilde k}}^2{s^{\frac{2}{\alpha }}}}}}}}{{s\det \left( { s{\bf{I}} + {{{\bf{\tilde \Sigma }}}^{ - 1}}} \right)}}ds}.
\end{align}
With the eigenvalue decomposition ${\bf{\tilde \Sigma }}=\bs{\tilde \Psi}  \bs{\tilde \Delta } {\bs{\tilde \Psi}}^{\rm{H}}$, \eqref{eqn:p_inte_out2} reduces to
\begin{equation}\label{eqn:p_inte_out2_simdef}
q_{\tilde k|{\mathcal H}} = \frac{1}{{2\pi {\rm{i}}}}\int\limits_{\tilde c - {\rm{i}}\infty }^{\tilde c + {\rm{i}}\infty } {{e^{{\tau _{\tilde k}}s}}\underbrace {\frac{{\prod\nolimits_{i = 1}^K {{e^{ - \frac{{s{{\left| {{\zeta _{\tilde i}}} \right|}^2}}}{{1 + s{\delta _{\tilde i}}}}}}} {e^{ - \pi {\lambda _b}{\omega _{\tilde k}}{d_{\tilde k}}^2{s^{\frac{2}{\alpha }}}}}}}{{s\prod\nolimits_{i = 1}^K {\left( {1 + s{\delta _{\tilde i}}} \right)} }}}_{{F_{\tilde k}}\left( s \right)}ds} ,
\end{equation}
where $\bs{\tilde \Delta}  = {\rm{diag}}\left( {{\delta _{\tilde 1}}, \cdots ,{\delta _{\tilde K}}} \right)$ and $\bs{\tilde \Psi}^{\rm{H}} \bs{\tilde \nu}  = {\left( {{\zeta  _{\tilde 1}}, \cdots ,{\zeta _{\tilde K}}} \right)^{\rm{H}}}$. In order to calculate $q_{\tilde k|{\mathcal H}}$, a numerical inversion of Laplace transform developed in \cite{abate1995numerical} can be applied herein.
More specifically, by using the method of Abate and Whitt, \eqref{eqn:p_inte_out2_simdef} can be approximated with an arbitrarily small discretization error as\cite{abate1995numerical}
\begin{align}\label{eqn:lap_approx}
q_{\tilde k|{\mathcal H}} \approx &\frac{{{2^{ - M}}{e^{A/2}}}}{{{\tau _{\tilde k}}}}\sum\limits_{m = 0}^M {{{M}\choose{m}}}
 \left( \begin{array}{l}
\frac{1}{2}{\rm{Re}}\left\{ {{F_{\tilde k}}\left( {\frac{A}{{2{\tau _{\tilde k}}}}} \right)} \right\} +
\sum\limits_{n = 1}^{Q + m} {{{\left( { - 1} \right)}^n}{\rm{Re}}\left\{ {{F_{\tilde k}}\left( {\frac{{A + 2n\pi {\rm{i}}}}{{2{\tau _{\tilde k}}}}} \right)} \right\}}
\end{array} \right),
\end{align}
where $M$ refers to the number of Euler summation terms, $Q$ refers to the truncation order, the discretization error is bounded by $|\epsilon |\le e^{-A}/(1-e^{-A})$ and the truncation error is manageable by properly choosing $M$ and $Q$. To have a discretization error up to $10^{-10}$, $A$ is set to $A \approx 23$. As suggested by \cite{abate1995numerical}, $M=11$ and $Q=15$ are typically chosen.

By putting \eqref{eqn:lap_approx} into \eqref{eqn:out_mdash}, we can get the final expression of the conditional outage probability ${p_{k|\mathcal H}}$.
\subsubsection{Average Outage Probability}
Furthermore, the average outage probability by taking the expectation over the distance $d_{\tilde k}$ can be expressed as
\begin{equation}\label{eqn:p_tilde_def1}
{p_{\tilde k}}= {{\mathbb{E}}_{{d_{\tilde k}}}}\left\{ {p_{\tilde k|{\mathcal H}}} \right\} =1- {{\mathbb{E}}_{{d_{\tilde k}}}}\left\{ {q_{\tilde k|{\mathcal H}}} \right\},
\end{equation}
where
\begin{align}\label{eqn:q_avg_dkt}
&{\mathbb E_{{d_{\tilde k}}}}\left\{ {{q_{\tilde k|{\cal H}}}} \right\} = \frac{1}{{2\pi {\rm{i}}}}\int\limits_{\tilde c - {\rm{i}}\infty }^{\tilde c + {\rm{i}}\infty } {{e^{{{\bar \tau }_{\tilde k}}s}}} 
\underbrace {\frac{{\prod\nolimits_{i = 1}^K {{e^{ - \frac{{s{{\left| {{\zeta _{\tilde i}}} \right|}^2}}}{{1 + s{\delta _{\tilde i}}}}}}} \overbrace {{\mathbb E_{{d_{\tilde k}}}}\left\{ {{e^{ - \frac{{{{\sigma_{\tilde k} ^2}}}}{{P\ell ( {{d_{\tilde k}}})}}s - \pi {\lambda _b}{\omega _{\tilde k}}{d_{\tilde k}}^2{s^{\frac{2}{\alpha }}}}}} \right\}}^{{\varphi _{\tilde k}}\left( s \right)}}}{{s\prod\nolimits_{i = 1}^K {\left( {1 + s{\delta _{\tilde i}}} \right)} }}}_{{g_{\tilde k}}\left( s \right)}ds,
\end{align}
and ${{\bar \tau }_{\tilde k}} = \left( {{1}/{({{2^{{R_{\tilde k}}}} - 1})} - {\beta _k}^2} \right){\beta _{\tilde k}}^2{\left| {{\mu _{\tilde k}}} \right|^2}$. Likewise, 
by invoking the method of Abate and Whitt, ${p_{\tilde k}}$ is obtained as
\begin{align}\label{eqn:p_k_tildefina}
{p_{\tilde k}} \approx& 1- \frac{{{2^{ - M}}{e^{A/2}}}}{{{\bar \tau _{\tilde k}}}}\sum\limits_{m = 0}^M {{{M}\choose{m}}}
\left( \begin{array}{l}
\frac{1}{2}{\rm{Re}}\left\{ {{g_{\tilde k}}\left( {\frac{A}{{2{\bar \tau _{\tilde k}}}}} \right)} \right\} +
\sum\limits_{n = 1}^{Q + m} {{{\left( { - 1} \right)}^n}{\rm{Re}}\left\{ {{g_{\tilde k}}\left( {\frac{{A + 2n\pi {\rm{i}}}}{{2{\bar \tau _{\tilde k}}}}} \right)} \right\}}
\end{array} \right).
\end{align}
It is worth noting that the policy of user pairing directly determines the distribution of $d_{\tilde k}$. Under different grouping policies, we have different forms of ${\varphi_{\tilde k}}\left( s \right)$. In the sequel, ${\varphi_{\tilde k}}\left( s \right)$ is derived by considering two grouping policies, i.e., the random grouping and the distance-based grouping.
\begin{theorem}\label{the:rand}
Regarding the random NOMA grouping policy, $2K$ users are randomly grouped into $K$ pairs. Under this grouping policy, ${\varphi_{\tilde k}}\left( s \right)$ is given by 
\begin{align}\label{eqn:varphi_tdfin}
{\varphi _{\tilde k}}\left( s \right) =&
\frac{{2c}}{{c + {\omega _{\tilde k}}{{s}^{\frac{2}{\alpha }}}}}H_{1,1}^{1,1}\left( {\left. {\frac{{{{\sigma_{\tilde k} ^2}}}}{P}s{{\left( {\pi {\lambda _b}\left( {c + {\omega _{\tilde k}}{{s}^{\frac{2}{\alpha }}}} \right)} \right)}^{ - \frac{\alpha }{2}}}} \right|\begin{array}{*{20}{c}}
{\left( {0,\frac{\alpha }{2}} \right)}\\
{\left( {0,1} \right)}
\end{array}} \right)\notag\\
& - \frac{{2c}}{{2c + {\omega _{\tilde k}}{{s}^{\frac{2}{\alpha }}}}}H_{1,1}^{1,1}\left( {\left. {\frac{{{{\sigma_{\tilde k} ^2}}}}{P}s{{\left( {\pi {\lambda _b}\left( {2c + {\omega _{\tilde k}}{{s}^{\frac{2}{\alpha }}}} \right)} \right)}^{ - \frac{\alpha }{2}}}} \right|\begin{array}{*{20}{c}}
{\left( {0,\frac{\alpha }{2}} \right)}\\
{\left( {0,1} \right)}
\end{array}} \right),
\end{align}
where $H_{p,q}^{m,n}(\cdot)$ represents the Fox H-function \cite[Def.1.1]{mathai2009h}. Moreover, under the interference-limited regime, i.e., $\sigma^2/P\to 0$, \eqref{eqn:varphi_tdfin} reduces to
\begin{equation}\label{eqn:varphi_ob}
{\varphi _{\tilde k}}\left( s \right) = \frac{{2c}}{{c + {\omega _{\tilde k}}{{s}^{\frac{2}{\alpha }}}}} - \frac{{2c}}{{2c + {\omega _{\tilde k}}{{s}^{\frac{2}{\alpha }}}}}.
\end{equation}
\end{theorem}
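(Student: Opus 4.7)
The object to compute is
\[
\varphi_{\tilde k}(s)=\mathbb{E}_{d_{\tilde k}}\!\Bigl\{e^{-\frac{\sigma_{\tilde k}^{2}}{P}s\,d_{\tilde k}^{\alpha}-\pi\lambda_b\omega_{\tilde k}d_{\tilde k}^{2}s^{2/\alpha}}\Bigr\},
\]
so the first task is to pin down the law of $d_{\tilde k}$ under random grouping. Since the $2K$ users are chosen uniformly at random in the typical Voronoi cell, each of their link distances independently obeys \eqref{eqn:pdf_d_approx}, and under random pairing $d_{\tilde k}$ is simply the larger of two i.i.d.\ copies. Taking the derivative of $(1-e^{-c\lambda_b\pi x^{2}})^{2}$ I obtain
\[
f_{d_{\tilde k}}(x)=4c\lambda_b\pi x\bigl(e^{-c\lambda_b\pi x^{2}}-e^{-2c\lambda_b\pi x^{2}}\bigr),\qquad x\ge 0,
\]
which cleanly splits $\varphi_{\tilde k}(s)$ into two structurally identical integrals indexed by the constant $C\in\{c,2c\}$.

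For each piece I would change variables $u=(C+\omega_{\tilde k}s^{2/\alpha})\pi\lambda_b x^{2}$, which converts $x\,dx$ into $du/[2(C+\omega_{\tilde k}s^{2/\alpha})\pi\lambda_b]$ and, crucially, turns the exponent $-\frac{\sigma_{\tilde k}^{2}}{P}s\,x^{\alpha}$ into $-a_Cu^{\alpha/2}$ with
\[
a_C=\frac{\sigma_{\tilde k}^{2}}{P}\,s\bigl(\pi\lambda_b(C+\omega_{\tilde k}s^{2/\alpha})\bigr)^{-\alpha/2}.
\]
The prefactors collapse to $2C/(C+\omega_{\tilde k}s^{2/\alpha})$, so only the integral $\mathcal{I}(a_C)=\int_{0}^{\infty}e^{-a_Cu^{\alpha/2}}e^{-u}\,du$ remains to be identified.

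The main obstacle, and the only nontrivial step, is to recognize $\mathcal{I}(a_C)$ as a Fox H-function. For this I would insert the Mellin--Barnes representation $e^{-z}=\frac{1}{2\pi{\rm i}}\int_{L}\Gamma(t)z^{-t}\,dt$ in place of $e^{-a_Cu^{\alpha/2}}$, exchange integrals (justified by absolute convergence on the contour), and evaluate the inner $u$-integral as $\int_{0}^{\infty}u^{-\alpha t/2}e^{-u}\,du=\Gamma(1-\alpha t/2)$. This yields
\[
\mathcal{I}(a_C)=\frac{1}{2\pi{\rm i}}\int_{L}\Gamma(t)\,\Gamma\!\bigl(1-\tfrac{\alpha}{2}t\bigr)a_C^{-t}\,dt=H_{1,1}^{1,1}\!\Bigl(a_C\,\Bigl|\,{(0,\alpha/2)\atop(0,1)}\Bigr.\Bigr),
\]
by direct comparison with \cite[Def.~1.1]{mathai2009h}. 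Substituting $C=c$ and $C=2c$ and taking the difference produces exactly \eqref{eqn:varphi_tdfin}.

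For the interference-limited specialization $\sigma^{2}/P\to 0$, I have $a_C\to 0$, and going back to the pre-transform integral shows $\mathcal{I}(0)=\int_{0}^{\infty}e^{-u}\,du=1$; equivalently, the two Fox H-functions in \eqref{eqn:varphi_tdfin} both reduce to unity, so \eqref{eqn:varphi_ob} follows immediately.
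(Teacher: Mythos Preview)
Your proof is correct and follows essentially the same route as the paper: derive the density of $d_{\tilde k}$ as the maximum of two i.i.d.\ Rayleigh variables, split the expectation into two integrals indexed by $C\in\{c,2c\}$, and recognize each as a Fox H-function via a Mellin--Barnes argument (the paper phrases this last step as the ``Parseval-type property of the Mellin transform'' together with \cite[Eq.~1.37]{mathai2009h}, which is the same mechanism you spell out explicitly). One small slip: the prefactor after your substitution is $2c/(C+\omega_{\tilde k}s^{2/\alpha})$, not $2C/(C+\omega_{\tilde k}s^{2/\alpha})$, since the $4c\lambda_b\pi$ in $f_{d_{\tilde k}}$ carries the fixed constant $c$ from \eqref{eqn:pdf_d_approx}; with this correction both terms match \eqref{eqn:varphi_tdfin} exactly.
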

\begin{proof}
Since the distance between a user and its serving BS follows a Rayleigh distribution as \eqref{eqn:pdf_d_approx}, the PDF of ${{d_{\tilde k}}}$ can be obtained by using order statistics as
\begin{equation}\label{eqn:dkt_dis}
{f_{{d_{\tilde k}}}}\left( x \right) = 2{F_d}(x){f_d}(x),
\end{equation}
where ${F_d}(x)=1 - {e^{ - c{\lambda _b}\pi {x^2}}}$ corresponds to the cumulative distribution function (CDF) of $d$. Based on \eqref{eqn:dkt_dis}, ${\varphi _{\tilde k}}\left( s \right)$ can be derived after some algebraic manipulations as
\begin{align}\label{eqn:varphitd}
{\varphi _{\tilde k}}\left( s \right) =& 2c{\lambda _b}\pi \int\limits_0^\infty  {{e^{ - \frac{{{{\sigma_{\tilde k} ^2}}}}{P}s{y^{\frac{\alpha }{2}}} - \pi {\lambda _b}\left( {{\omega _{\tilde k}}{s^{\frac{2}{\alpha }}} + c } \right)y}}dy}
-2c{\lambda _b}\pi \int\limits_0^\infty  {{e^{ - \frac{{{{\sigma_{\tilde k} ^2}}}}{P}s{y^{\frac{\alpha }{2}}} - \pi {\lambda _b}\left( {{\omega _{\tilde k}}{s^{\frac{2}{\alpha }}} + 2c } \right)y}}dy}.
\end{align}
By using the Parseval's type property of Mellin transform \cite[eq.8.3.21]{debnath2010integral} along with \cite[Eq.1.37]{mathai2009h}, \eqref{eqn:varphitd} can be represented in terms of Fox's H-function as \eqref{eqn:varphi_tdfin}. Moreover, by using the asymptotic expansion of \cite[Theorem.1.2]{mathai2009h}, the Fox's H-function in \eqref{eqn:varphi_tdfin} approaches to 1 as $\sigma^2/P\to 0$. Thus, we have \eqref{eqn:varphi_ob}.
\end{proof}

\begin{theorem}\label{the:disb}
Regarding the distance-based NOMA grouping policy, $2K$ users are assumed to rank in an increasing order according to their distances from their serving BS. Let $r_{\tilde k}$ denote the ranking order of user $\tilde k$. Under this grouping policy, ${\varphi_{\tilde k}}\left( s \right)$ is given by
\begin{align}\label{eqn:varphi_dis}
&{\varphi _{\tilde k}}\left( s \right) =\notag\\
& c{r_{\tilde k}} {2K\choose{{r_{\tilde k}}}}\sum\limits_{l = 0}^{{r_{\tilde k}} - 1} {\frac{{{{\left( { - 1} \right)}^l}{{r_{\tilde k}} - 1\choose{l}}}H_{1,1}^{1,1}\left( {\left. {\frac{{{{\sigma_{\tilde k} ^2}}}}{P}s{{\left( {\pi {\lambda _b}\left( {\left( {2K - {r_{\tilde k}} + l + 1} \right)c + {\omega _{\tilde k}}{s^{\frac{2}{\alpha }}}} \right)} \right)}^{ - \frac{\alpha }{2}}}} \right|\begin{array}{*{20}{c}}
{\left( {0,\frac{\alpha }{2}} \right)}\\
{\left( {0,1} \right)}
\end{array}} \right)}{{\left( {2K - {r_{\tilde k}} + l + 1} \right)c + {\omega _{\tilde k}}{s^{\frac{2}{\alpha }}}}}}.
\end{align}
In the interference-limited regime, i.e., $\sigma^2/P\to 0$, it follows that
\begin{equation}\label{eqn:varphi_dis_int}
{\varphi _{\tilde k}}\left( s \right) = c{r_{\tilde k}}{{2K}\choose{r_{\tilde k}}}\sum\limits_{l = 0}^{{r_{\tilde k}} - 1} {\frac{{{{\left( { - 1} \right)}^l}{{{r_{\tilde k}} - 1}\choose{l}}}}{{\left( {2K - {r_{\tilde k}} + l + 1} \right)c + {\omega _{\tilde k}}{s^{\frac{2}{\alpha }}}}}}.
\end{equation}
\end{theorem}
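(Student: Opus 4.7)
My plan is to follow the same template as the proof of Theorem~\ref{the:rand}, the only substantive change being the distribution of $d_{\tilde k}$. Under distance-based grouping, $d_{\tilde k}$ is the $r_{\tilde k}$-th order statistic of $2K$ i.i.d.\ link distances, each distributed as in \eqref{eqn:pdf_d_approx}. Once the order-statistic PDF is expanded as a linear combination of Rayleigh-type densities, each resulting integral has exactly the same structure as the integrals in \eqref{eqn:varphitd} and can be evaluated via the Mellin/Parseval route used in Theorem~\ref{the:rand}. This reduces the argument to (i)~a binomial expansion and (ii)~a line-by-line reuse of the Fox H-function identity already invoked in the previous proof.

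Concretely, I would first write the standard order-statistic formula
\begin{equation*}
f_{d_{\tilde k}}(x)=\frac{(2K)!}{(r_{\tilde k}-1)!(2K-r_{\tilde k})!}\,F_d(x)^{r_{\tilde k}-1}\bigl(1-F_d(x)\bigr)^{2K-r_{\tilde k}}f_d(x),
\end{equation*}
with $F_d(x)=1-e^{-c\lambda_b\pi x^{2}}$ and $f_d(x)=2c\lambda_b\pi x\,e^{-c\lambda_b\pi x^{2}}$. Next I would apply the binomial theorem to $F_d(x)^{r_{\tilde k}-1}=\sum_{l=0}^{r_{\tilde k}-1}\binom{r_{\tilde k}-1}{l}(-1)^{l}e^{-lc\lambda_b\pi x^{2}}$ and merge the exponentials, so that after the substitution $y=x^{2}$ the density becomes a finite sum of terms proportional to $e^{-(2K-r_{\tilde k}+l+1)c\lambda_b\pi y}\,dy$, with the binomial prefactor simplifying to $r_{\tilde k}\binom{2K}{r_{\tilde k}}$.

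Then I would substitute this expansion into the definition of $\varphi_{\tilde k}(s)$ from \eqref{eqn:q_avg_dkt} (recalling $\ell(d_{\tilde k})^{-1}=d_{\tilde k}^{\alpha}=y^{\alpha/2}$). Each summand reduces to the generic integral
\begin{equation*}
\int_{0}^{\infty}\exp\!\Bigl(-\tfrac{\sigma_{\tilde k}^{2}}{P}s\,y^{\alpha/2}-\pi\lambda_b\bigl((2K-r_{\tilde k}+l+1)c+\omega_{\tilde k}s^{2/\alpha}\bigr)y\Bigr)dy,
\end{equation*}
which is structurally identical to the integrals handled in \eqref{eqn:varphitd}, only with the coefficient of $y$ in the exponent changed from $c$ or $2c$ to $(2K-r_{\tilde k}+l+1)c$. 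Invoking the Parseval-type Mellin identity \cite[eq.8.3.21]{debnath2010integral} together with \cite[Eq.1.37]{mathai2009h} exactly as before converts this integral into a single Fox H-function $H_{1,1}^{1,1}$ with the parameters shown in \eqref{eqn:varphi_dis}, while the factor $1/(\pi\lambda_b(\cdots))$ produced by the Mellin step combines with the prefactor $c\lambda_b\pi$ (coming from $f_d$ together with the Jacobian $y=x^{2}$) to produce the claimed $c/\bigl((2K-r_{\tilde k}+l+1)c+\omega_{\tilde k}s^{2/\alpha}\bigr)$.

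Summing over $l$ delivers \eqref{eqn:varphi_dis}. For the interference-limited limit \eqref{eqn:varphi_dis_int}, I would invoke the same asymptotic expansion \cite[Theorem.1.2]{mathai2009h} already used at the end of Theorem~\ref{the:rand}, which shows that each $H_{1,1}^{1,1}(\cdot)\to 1$ as $\sigma^{2}/P\to 0$. I do not foresee a genuinely hard step: the principal care is bookkeeping, namely keeping track of the signs $(-1)^{l}$, verifying $\tfrac{(2K)!}{(r_{\tilde k}-1)!(2K-r_{\tilde k})!}=r_{\tilde k}\binom{2K}{r_{\tilde k}}$, and checking that the exponent $(2K-r_{\tilde k}+l+1)$ on the Rayleigh density matches the denominator inside the H-function argument, so that the final expression aligns exactly with \eqref{eqn:varphi_dis}.
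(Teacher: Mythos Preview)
Your proposal is correct and follows essentially the same approach as the paper: the paper's proof simply writes down the order-statistic density \eqref{eqn:dkt_order} and then says ``some basic manipulations'' yield \eqref{eqn:varphi_dis}, which is precisely the binomial expansion and term-by-term reuse of the Theorem~\ref{the:rand} Fox H-function evaluation that you spell out. Your added detail (the binomial identity $\tfrac{(2K)!}{(r_{\tilde k}-1)!(2K-r_{\tilde k})!}=r_{\tilde k}\binom{2K}{r_{\tilde k}}$ and the exponent bookkeeping producing $2K-r_{\tilde k}+l+1$) is exactly what the paper leaves implicit.
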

\begin{proof}
By using order statistics, the PDF of ${{d_{\tilde k}}}$ is expressed as
\begin{equation}\label{eqn:dkt_order}
{f_{{d_{\tilde k}}}}\left( x \right) = {r_{\tilde k}} {{2K}\choose{r_{\tilde k}}}{\left( {{F_d}(x)} \right)^{{r_{\tilde k}} - 1}}{\left( {1 - {F_d}(x)} \right)^{2K - {r_{\tilde k}}}}{f_d}(x).
\end{equation}
Then, using \eqref{eqn:dkt_order} along with some basic manipulations results in \eqref{eqn:varphi_dis}.
\end{proof}

Surprisingly, from the above two theorems, the average outage probability ${p_{\tilde k}}$ is independent of the intensity of BSs in the interference-limited regime because ${\varphi _{\tilde k}}\left( s \right)$ is irrelevant to ${{\lambda _b}}$. This is due to the fact that the effect of $\lambda_b$ has two-fold. On one hand, the increase of $\lambda_b$ yields severe co-channel interference, which eventually leads to high outage probability. On the other hand, the size of the Voronoi cell diminishes with the increase of $\lambda_b$, thus it leads to the decay of path loss.

\subsection{Outage Probability of the Near User $k$}

\subsubsection{Conditional Outage Probability}
By conducting the SIC at the near user, the near user has to subtract the far user's signal prior to decoding its own signal. Hence, the conditional outage probability of user $k$ given $\mathcal H$ is given by
\begin{align}\label{eqn:out_m}
{p_{k|\mathcal H}} &= \Pr \left( {{{\log }_2}\left( {1 + {{\rm SINR} _{k \to \tilde k}}} \right) < { R_{\tilde k}}\bigcup{{{\log }_2}\left( {1 + {{\rm SINR} _k} } \right) < { R_k}}} \right)\notag \\
 &= 1 - \underbrace{\Pr \left( {{{\rm SINR} _{k \to \tilde k}} \ge {2^{{R_{\tilde k}}}} - 1, { {{{\rm SINR} _k}} \ge {2^{{ R_k}}} - 1}} \right)}_{{q_{ k|{\mathcal{H}}}}}.
\end{align}
Clearly, ${\rm{SIN}}{{\rm{R}}_{k \to \tilde k}} $ and ${\rm{SIN}}{{\rm{R}}_k}$ are strongly correlated due to the involvement of the same co-channel interference and channel uncertainty. The existence of the spatial correlation challenges the later outage analysis. By using (\ref{eqn:gamma_sinr_mdash}) and (\ref{eqn:sinr_mm}), ${q_{ k|{\mathcal{H}}}}$ can be rewritten as 
\begin{equation}\label{eqn:out_m_subs}
{q_{ k|{\mathcal{H}}}} = \Pr \left( \begin{array}{l}
{\beta _{\tilde k}}^2{\left| {{\chi _k}} \right|^2} + {\beta _k}^2{\left| {{\mu _k} + {\chi _k}} \right|^2} + \sum\nolimits_{i = 1,i \ne k}^K {{{\left| {{\mu _i} + {\chi _i}} \right|}^2}}  \le {\theta _{\tilde k}} - \frac{{{I_k}}}{{P\ell \left( {{d_k}} \right)}},\\
{\left| {{\chi _k}} \right|^2} + \sum\nolimits_{i = 1,i \ne k}^K {{{\left| {{\mu _i} + {\chi _i}} \right|}^2}}  \le {\theta _k} - \frac{{{I_k}}}{{P\ell \left( {{d_k}} \right)}}
\end{array} \right),
\end{equation}
where ${\mu _i} = {{\bf{u}}_k}^{\rm{H}}{{{\bf{\hat H}}}_{zk}}{{\bf{v}}_i}$, ${\chi _i} = {{\bf{u}}_k}^{\rm{H}}{{\bf{E}}_{zk}}{{\bf{v}}_i}$, ${\theta _k} = {{{{\left| {{\mu _k}} \right|}^2}{\beta _k}^2}}/{({{2^{{R_k}}} - 1})} - {{{{\sigma_{k} ^2}}}}/{({P\ell \left( {{d_k}} \right)})}$ and ${\theta _{\tilde k}} = {{{{\left| {{\mu _k}} \right|}^2}{\beta _{\tilde k}}^2}}/{({{2^{{R_{\tilde k}}}} - 1})} - {{{{\sigma_{k} ^2}}}}/{({P\ell \left( {{d_k}} \right)})}$.

Similarly, it can be easily proved that ${\bs{ {\chi}}}=(\chi_1,\cdots,\chi_K)^{\rm T}= {\left( {{{\bf{u}}_{ k}}^{\rm{H}}{{\bf{E}}_{z k}}{\bf{V}}} \right)^{\rm{T}}}$ is a CSCG random vector with mean zero and covariance matrix ${\bf{ \Sigma }} = {\sigma _h^2}{{\bf{u}}_{ k}}^{\rm{H}}{{\bf{R}}_{rk}}{{\bf{u}}_{ k}}{\left( {{{\bf{V}}^{\rm{H}}}{{\bf{R}}_{tk}}{\bf{V}}} \right)^{\rm{T}}} $, i.e., ${\bs{ {\chi}}}\sim {\cal CN}(\bf 0 ,\bf\Sigma )$. With the help of the step unit function, ${q_{ k|{\mathcal{H}}}}$ can thus be expressed as 
\begin{align}\label{eqn:poutm_indi}
{q_{ k|{\mathcal{H}}}} ={{\mathbb{E}}_{{I_k}}}\left\{ \begin{array}{l}
\int\limits_{{\mathbb C^K}} {u\left( {{\theta _{\tilde k}} - \frac{{{I_k}}}{{P\ell \left( {{d_k}} \right)}} - {\beta _{\tilde k}}^2{{\left| {{\chi _k}} \right|}^2} - {\beta _k}^2{{\left| {{\mu _k} + {\chi _k}} \right|}^2} - \sum\nolimits_{i = 1,i \ne k}^K {{{\left| {{\mu _i} + {\chi _i}} \right|}^2}} } \right)} \\
 \times u\left( {{\theta _k} - \frac{{{I_k}}}{{P\ell \left( {{d_k}} \right)}} - {{\left| {{\chi _k}} \right|}^2} - \sum\nolimits_{i = 1,i \ne k}^K {{{\left| {{\mu _i} + {\chi _i}} \right|}^2}} } \right){f_{\bs\chi} }\left( {\bf{x}} \right)d{\bf{x}}
\end{array} \right\},
\end{align}
where ${f_{\bs{ \chi}}}\left( {\bf{x}} \right)$ stands for the joint PDF of ${\bs{ \chi}}$ and ${f_{\bs{ \chi}}}\left( {\bf{x}} \right) ={\exp({ - {{\bf{x}}^{\rm{H}}}{{\bf{\Sigma }}^{ - 1}}{\bf{x}}})}/{({{\pi ^K}\det  ({\bf{\Sigma }}) })}$. In analogous to \eqref{eqn:inve_Lap_out_tilde}, by using the inverse Laplace transform of the step unit function along with some rearrangements, ${q_{ k|{\mathcal{H}}}}$ can be obtained as 
\begin{equation}\label{eqn:poutm_rearrage}
{q_{ k|{\mathcal{H}}}}  = \frac{1}{{{{\left( {2\pi {\rm{i}}} \right)}^2}}}\int\limits_{c_1 - {\rm{i}}\infty }^{c_1 + {\rm{i}}\infty } {\int\limits_{c_2 - {\rm{i}}\infty }^{c_2 + {\rm{i}}\infty } {\frac{1}{s}\frac{1}{t}{e^{s{\theta _{\tilde k}} + t{\theta _k}}}{{\mathbb{E}}_{{I_k}}}\left\{ {{e^{ - \left( {s + t} \right)\frac{{{I_k}}}{{P\ell \left( {{d_k}} \right)}}}}} \right\}\mathcal K\left( {s,t} \right)dsdt} } ,\, c_1, c_2 > 0,
\end{equation}
where
\begin{align}\label{eqn:joint_integ}
&\mathcal K\left( {s,t} \right)= 
\int\limits_{{\mathbb C^M}} {{e^{ - (s{\beta _{\tilde k}}^2+t){{\left| {{x _k}} \right|}^2}  - s{\beta _k}^2{{\left| {{\mu _k} + {x _k}} \right|}^2} - \left( {s + t} \right)\sum\nolimits_{i = 1,i \ne k}^K {{{\left| {{\mu _i} + {x _i}} \right|}^2}} }}}
{f_{\bs \chi} }\left( {\bf{x}} \right)d{\bf{x}}.
\end{align}
In regard to the expectation term in \eqref{eqn:poutm_rearrage}, it can be derived similarly to \eqref{eqn:laplacefunc_faruser} as
\begin{align}\label{eqn:laplacefunc_farusernt}
{{\mathbb{E}}_{{I_{ k}}}}\left\{ {{e^{-s\frac{{{I_{ k}}}}{{P\ell \left( {{d_{ k}}} \right)}}}}} \right\}&= {e^{ - \pi {\lambda _b}{\omega _{ k}}{d_{ k}}^2{(s+t)^{\frac{2}{\alpha }}}}},
\end{align}
where ${\omega _{ k}} = \Gamma \left( {1 - \frac{2}{\alpha }} \right){\left( {\frac{{{\rho _I}}}{P}{{\left| {{{\bf{u}}_{ k}}^{\rm{H}}{{\bf{1}}_K}} \right|}^2}} \right)^{\frac{2}{\alpha }}}$. Additionally, in order to get $\mathcal K\left( {s,t} \right)$, we define the following vectors ${\bs{\mu }} = \left( \mu_1,\cdots,\mu_K \right)^{\rm T}$,
\begin{equation}\label{eqn:A_matrix}
{\bf{A}} = {\rm{diag}}\left( {\overbrace {\underbrace {s + t, \cdots ,s + t}_{\left( {k - 1} \right) - {\rm{entries}}},s{\beta _k}^2,s + t, \cdots ,s + t}^{K - {\rm{entries}}}} \right),
\end{equation}
\begin{equation}\label{eqn:B_matrix}
{\bf{B}} = {\rm{diag}}\left( {\overbrace {\underbrace {0, \cdots ,0}_{\left( {k - 1} \right) - {\rm{entries}}},s{\beta _{\tilde k}}^2 + t,0, \cdots ,0}^{K - {\rm{entries}}}} \right).
\end{equation}
Hereby, $\mathcal K\left( {s,t} \right)$ can be rewritten as
\begin{align}\label{eqn:kappa_rew}
{\cal K}(s,t) &=\frac{1}{{{\pi ^K}\det \left( {\bf{\Sigma }} \right)}}\int\nolimits_{{\mathbb C^K}} {{e^{ - {{\left( {{\bf{x}} + {\bs{\mu }}} \right)}^{\rm{H}}}{\bf{A}}\left( {{\bf{x}} + {\bs{\mu }}} \right) - {{\bf{x}}^{\rm{H}}}\left( {{\bf{B}} + {{\bf{\Sigma }}^{ - 1}}} \right){\bf{x}}}}d{\bf{x}}}.
\end{align}
By using the sum of two quadratic forms \cite[Eq.8.1.7]{petersen2008matrix}, we obtain
\begin{align}\label{eqn:identity_important}
&{\left( {{\bf{x}} + {\bs{\mu }}} \right)^{\rm{H}}}{\bf{A}}\left( {{\bf{x}} + {\bs{\mu }}} \right) + {{\bf{x}}^{\rm{H}}}\left( {{\bf{B}} + {{\bf{\Sigma }}^{ - 1}}} \right){\bf{x}} 
={\left( {{\bf{x}} + {\bs{\nu}}} \right)^{\rm{H}}}{\bf{\Xi }}\left( {{\bf{x}} + {\bs{\nu}}} \right) + \phi \left( {s,t} \right),
\end{align}
where ${\bf{\Xi }} = {\bf{A} + \bf{B}} + {{\bf{\Sigma }}^{ - 1}}=(s+t){\bf I} +{{\bf{\Sigma }}^{ - 1}}$, ${\bs{\nu }} = {{\bf{\Xi }}^{ - 1}}{\bf{A  }}\bs\mu$ and
\begin{align}\label{eqn:phi_def}
\phi\left( {s,t} \right) &= {{\bs{\mu }}^{\rm{H}}}{\bf{A}}{\bs{\mu }}-{{\bs{\mu }}^{\rm{H}}}{\bf{A}}{{\bf{\Xi }}^{ - 1}}{\bf A}{\bs{\mu }}= {{\bs{\mu }}^{\rm{H}}}{\bf{A}}{{\bf{\Xi }}^{ - 1}}{(\bf{B} + {{\bf{\Sigma }}^{ - 1}})}{\bs{\mu }}.
\end{align}
Substituting \eqref{eqn:identity_important} into \eqref{eqn:kappa_rew} yields
\begin{align}\label{eqn:kappa_rewri}
 {\cal K}(s,t)&= \frac{{{e^{ - \phi\left( {s,t} \right)}}}}{{{\pi ^K}\det \left( {\bf{\Sigma }} \right)}}\int\limits_{{\mathbb C^K}} {{e^{ - {{\left( {{\bf{x}} + {\bs{\nu }}} \right)}^{\rm{H}}}{\bf{\Xi }}\left( {{\bf{x}} + {\bs{\nu }}} \right)}}d{\bf{x}}}
 =\frac{{{e^{ - \phi \left( {s,t} \right)}}}}{{\det \left( {\bf{\Sigma }} \right)\det \left( {\bf{\Xi }} \right)}},
\end{align}
where the last step holds by using \cite[Eq. 22]{al2016distribution}.
By plugging \eqref{eqn:laplacefunc_farusernt} and \eqref{eqn:kappa_rewri} into (\ref{eqn:poutm_rearrage}), it results in
\begin{align}\label{eqn:pout_after_inner_inte}
&{q_{k|\mathcal H}} = \frac{1}{{\det \left( {\bf{\Sigma }} \right)}}\frac{1}{{{{\left( {2\pi {\rm{i}}} \right)}^2}}}
\int\limits_{c_1 - {\rm{i}}\infty }^{c_1 + {\rm{i}}\infty } {\int\limits_{c_2 - {\rm{i}}\infty }^{c_2 + {\rm{i}}\infty } {\frac{1}{st\det \left( {\bf{\Xi }} \right)}{e^{s{\theta _{\tilde k}} + t{\theta _k} - \pi {\lambda _b}{\omega _k}{d_k}^2{{(s + t)}^{\frac{2}{\alpha }}} - \phi \left( {s,t} \right)}}dsdt} }.
\end{align}

Moreover, by capitalizing on eigenvalue decomposition ${\bf{\Sigma  = \Psi \Delta }}{{\bf{\Psi }}^{\bf{H}}}$, defining ${\bf{\Psi }} = \left( {{{\bs{\psi }}_1}, \cdots ,{{\bs{\psi }}_K}} \right)$ and ${\bf{\Delta }} = {\rm{diag}}\left( {{\delta _1}, \cdots ,{\delta _K}} \right)$, $\phi \left( {s,t} \right)$ in \eqref{eqn:pout_after_inner_inte} can be simplified as
\begin{equation}\label{eqn:phi_simp}
\phi \left( {s,t} \right) = \sum\limits_{i = 1}^K {\frac{{{{\bs{\mu }}^{\rm{H}}}{\bf{A}}{{\bs{\psi }}_i}\left( {{\delta _i}{{\bs{\psi }}_i}^{\rm{H}}{{\bf B}{\bs\mu }} + {{\bs{\psi }}_i}^{\rm{H}}{\bs{\mu }}} \right)}}{{1 + \left( {s + t} \right){\delta _i}}}}.
\end{equation}
As a consequence, ${q_{k|{\mathcal H}}}$ is derived as
\begin{align}\label{eqn:q_k_fin}
{q_{k|{\mathcal H}}} &= \frac{1}{{{{\left( {2\pi {\rm{i}}} \right)}^2}}}\int\limits_{c_1 - {\rm{i}}\infty }^{c_1 + {\rm{i}}\infty } {\int\limits_{c_2 - {\rm{i}}\infty }^{c_2 + {\rm{i}}\infty } {{e^{s{\theta _{\tilde k}} + t{\theta _k}}}\underbrace {\frac{{{e^{ - \pi {\lambda _b}{\omega _k}{d_k}^2{{(s + t)}^{\frac{2}{\alpha }}} - \sum\nolimits_{i = 1}^K {\frac{{{{\bs{\mu }}^{\rm{H}}}{\bf{A}}{{\bs{\psi }}_i}\left( {{\delta _i}{{\bs{\psi }}_i}^{\rm{H}}{{\bf B}{\bs \mu }} + {{\bs{\psi }}_i}^{\rm{H}}{\bs{\mu }}} \right)}}{{1 + \left( {s + t} \right){\delta _i}}}} }}}}{{st\prod\nolimits_{i = 1}^K {\left( {1 + \left( {s + t} \right){\delta _i}} \right)} }}}_{{F_k}\left( s,t \right)}dsdt} }\notag\\
&\triangleq f_k(\theta_{\tilde k},\theta_{ k}).
\end{align}

The above two-dimensional inverse Laplace transform can be obtained by means of the Moorthy algorithm \cite{moorthy1995numerical}. More specifically, $f_k(\theta_{\tilde k},\theta_{ k})$ is approximated by adopting a trapezoidal rule of integration as 
\begin{equation}\label{eqn:f_k_app}
f_k(\theta_{\tilde k},\theta_{ k}) \approx \frac{e^{c_1\theta_{\tilde k}+c_2\theta_{ k}}}{4T^2}\left\{ {2{\rm{Re}}\left\{ \begin{array}{l}
\sum\nolimits_{{l_1} = 1}^\infty  {\sum\nolimits_{{l_2} = 1}^\infty  {F_k^{ {l_1}, {l_2}}E_{\theta_{\tilde k},\theta_{ k}}^{ {l_1}, {l_2}}} + {F_k^{ {l_1},-{l_2}}E_{\theta_{\tilde k},\theta_{ k}}^{ {l_1},-{l_2}}}}  \\
  +\sum\nolimits_{{l_1} = 1}^\infty  {F_k^{ {l_1},0}E_{\theta_{\tilde k}}^{ {l_1}}}  + \sum\nolimits_{{l_2} = 1}^\infty  {F_k^{0, {l_2}}E_{\theta_{ k}}^{ {l_2}}}
\end{array} \right\} + F_k^{0,0}} \right\},
\end{equation}
where the parameter $T$ determines the sampling period, $L$ is the truncation order,
\begin{equation}\label{eqn:F_kl1l2}
F_k^{{l_1},{l_2}} = {F_k}\left( {{c_1} + {\rm i}{l_1}{\pi}/T,{c_2} + {\rm i}{l_2}{\pi}/T} \right),
\end{equation}
\begin{equation}\label{eqn:F_En1n2}
E_{\theta_{\tilde k},\theta_{ k}}^{{l_1},{l_2}} = {e^{{\rm{i}}{l_1}\pi\theta_{\tilde k}/{T} + {\rm{i}}{l_2}\pi\theta_{ k}/{T}}} = E_{\theta_{\tilde k}}^{{l_1}}E_{\theta_{ k}}^{{l_2}},
\end{equation}
As proved in \cite{moorthy1995numerical}, the discretization error $E_r$ can be controlled by properly setting $c_i$ such that $c_2=-1/(2T)\ln((E_r-a)/(1-\xi))$, $\xi=e^{-2Tc_1}<E_r$. According to the Moorthy algorithm, \eqref{eqn:f_k_app} can be calculated by truncating the infinite series such that $l_i\in [0,L]$. The tests in \cite{moorthy1995numerical} suggest $L/\max(\theta_{\tilde k},\theta_{ k}) \in [0.5,0.8]$. Furthermore, the Epsilon algorithm is invoked to extrapolate the remaining terms above $L$ terms so as to accelerate the convergence and accuracy\cite{macdonald1964accelerated}. To be specific, $2{\cal P}+1$ partial sums are used for the Epsilon algorithm, and the experiments in \cite{brancik2004error} suggest ${\cal P}=2$.

Thereafter, by putting \eqref{eqn:q_k_fin} into \eqref{eqn:out_m}, we can obtain the conditional outage probability ${p_{k|\mathcal H}}$.
\subsubsection{Approximate Expression of ${p_{k|{\cal H}}}$}
Unfortunately, the two-dimensional numerical inversion of Laplace transform entails considerable computation burden, which precludes to the real-time optimal system design. In what follows, an approximate expression of ${p_{k|\mathcal H}}$ is derived to combat this dilemma. By overlooking the dependence between ${\rm{SIN}}{{\rm{R}}_{k \to \tilde k}} $ and ${\rm{SIN}}{{\rm{R}}_k}$, ${p_{k|{\cal H}}}$ is approximated by using \eqref{eqn:out_m} as
\begin{align}\label{eqn:pkupper}
{p_{k|{\cal H}}} \approx 1 - \underbrace {\Pr \left( {{\rm{SIN}}{{\rm{R}}_{k \to \tilde k}} \ge {2^{{R_{\tilde k}}}} - 1} \right)}_{{q_{k \to \tilde k|{\cal H}}}}
\underbrace {\Pr \left( {{\rm{SIN}}{{\rm{R}}_k} \ge {2^{{R_k}}} - 1} \right)}_{{q_{k \to  k|{\cal H}}}}.
\end{align}
Likewise, ${q_{k \to \tilde k|{\cal H}}}$ and ${q_{k \to \tilde k|{\cal H}}}$ can be obtained by following the same steps as in Section \ref{sec:outkt}. To be specific, by introducing two vectors ${{\bs{\nu }}_1} = {\left( {{\mu _1}, \cdots ,{\mu _{k - 1}},{\beta _k}^2{\mu _k},{\mu _{k + 1}}, \cdots ,{\mu _K}} \right)^{\rm{T}}}$ and ${{\bs{\nu }}_2} = {\left( {{\mu _1}, \cdots ,{\mu _{k - 1}},0,{\mu _{k + 1}}, \cdots ,{\mu _K}} \right)^{\rm{T}}}$, ${q_{k \to \tilde k|{\cal H}}}$ and ${q_{k \to  k|{\cal H}}}$ can be respectively simplified as
\begin{align}\label{eqn:pkhup_rew0}
&{q_{k \to \tilde k|{\cal H}}} = 
\Pr \left( {{{\left( {{\bs{\chi }} + {{\bs{\nu }}_1}} \right)}^{\rm{H}}}\left( {{\bs{\chi }} + {{\bs{\nu }}_1}} \right) \ge {\theta _{\tilde k}} - {\beta _k}^2{\beta _{\tilde k}}^2{{\left| {{\mu _k}} \right|}^2}} -\frac{{{I_k}}}{{P\ell \left( {{d_k}} \right)}}\right),
\end{align}
\begin{align}\label{eqn:pkhup_rew1}
{q_{k \to  k|{\cal H}}} =&\Pr \left( {{{\left( {{\bs{\chi }} + {{\bs{\nu }}_2}} \right)}^{\rm{H}}}\left( {{\bs{\chi }} + {{\bs{\nu }}_2}} \right) \ge {\theta _k}}-\frac{{{I_k}}}{{P\ell \left( {{d_k}} \right)}} \right),
\end{align}
where \eqref{eqn:pkhup_rew0} is derived in analogous to \eqref{eqn:p_out_cnd_far_simp}. Apparently, both \eqref{eqn:pkhup_rew0} and \eqref{eqn:pkhup_rew1} can be computed as \eqref{eqn:lap_approx} by replacing $({\tau _{\tilde k}},{\bs{\tilde \nu }},{\bf{\tilde \Sigma }},{\omega _{\tilde k}},{d_{\tilde k}})$ with $({\theta _{\tilde k}} - {\beta _k}^2{\beta _{\tilde k}}^2{{\left| {{\mu _k}} \right|}^2},{\bs{ \nu }}_1,{\bf{ \Sigma }},{\omega _{ k}},{d_{ k}})$ and $({\theta _k},{\bs{ \nu }}_2,{\bf{ \Sigma }},{\omega _{ k}},{d_{ k}})$, respectively. The explicit expressions of ${q_{k \to \tilde k|{\cal H}}}$ and ${q_{k \to  k|{\cal H}}}$ are omitted here to conserve space. Moreover, the approximate expression of \eqref{eqn:pkupper} will underestimate the outage performance due to the neglect of positive correlation between successful events of subtracting the far user's message and decoding its own message at user $k$. This assertion is further justified by the simulation results in Section \ref{label:num}. Hence, \eqref{eqn:pkupper} in fact serves as an upper bound of ${p_{k|{\cal H}}}$. With \eqref{eqn:pkupper}, a robust optimal design can then be enabled.
\subsubsection{Average Outage Probability}
Moreover, the average outage probability by taking the expectation with respect to the distance $d_k$ is given by
\begin{equation}\label{eqn:pK_avg_in}
{p_k} = {{\mathbb{E}}_{{d_{ k}}}}\left\{{{p_{ k|{\mathcal{H}}}}}\right\} = 1 - {{\mathbb{E}}_{{d_{ k}}}}\left\{{{q_{ k|{\mathcal{H}}}}}\right\},
\end{equation}
where ${{\mathbb{E}}_{{d_{ k}}}}\left\{{{q_{ k|{\mathcal{H}}}}}\right\}$ is obtained by 
\begin{align}\label{eqn:q_k_avg}
&{{\mathbb{E}}_{{d_{ k}}}}\left\{{{q_{ k|{\mathcal{H}}}}}\right\} =\notag\\
& \frac{1}{{{{\left( {2\pi {\rm{i}}} \right)}^2}}}\int\limits_{{c_1} - {\rm{i}}\infty }^{{c_1} + {\rm{i}}\infty } {\int\limits_{{c_2} - {\rm{i}}\infty }^{{c_2} + {\rm{i}}\infty } {{e^{s{{\bar \theta }_{\tilde k}} + t{{\bar \theta }_k}}}\underbrace {\frac{{{e^{ - \sum\nolimits_{i = 1}^K {\frac{{{\bs \mu ^{\rm{H}}}{\bf{A}}{\bs \psi _i}\left( {{\delta _i}{\bs \psi _i}^{\rm{H}}{\bf{B}}\bs \mu  + {\bs\psi _i}^{\rm{H}}\bs\mu } \right)}}{{1 + \left( {s + t} \right){\delta _i}}}} }}\overbrace {\mathbb E\left\{ {{e^{ - \frac{{{{\sigma_{k} ^2}}}}{{P\ell \left( {{d_k}} \right)}}\left( {s + t} \right) - \pi {\lambda _b}{\omega _k}{d_k}^2{{(s + t)}^{\frac{2}{\alpha }}}}}} \right\}}^{{\varphi _k}\left( {s + t} \right)}}}{{st\prod\nolimits_{i = 1}^K {\left( {1 + \left( {s + t} \right){\delta _i}} \right)} }}}_{{g_k}\left( {s,t} \right)}dsdt} }.
\end{align}
where ${{\bar \theta }_k} = {\left| {{\mu _k}} \right|^2}{\beta _k}^2/({2^{{R_k}}} - 1)$ and ${{\bar \theta }_{\tilde k}} = {\left| {{\mu _k}} \right|^2}{\beta _{\tilde k}}^2/({2^{{R_{\tilde k}}}} - 1)$.  By using the Moorthy algorithm, ${p_k}$ can be consequently derived as 
\begin{align}\label{eqn:pK_fina}
{p_k} 
&\approx 1 - \frac{e^{c_1\bar \theta_{\tilde k}+c_2\bar \theta_{ k}}}{4T^2}\left\{ {2{\rm{Re}}\left\{ {\begin{array}{*{20}{l}}
{\sum\nolimits_{{l_1} = 1}^L  {\sum\nolimits_{{l_2} = 1}^L  {g_k^{ {l_1}, {l_2}}E_{\bar \theta_{\tilde k},\bar \theta_{ k}}^{ {l_1}, {l_2}}} }  + {g_k^{ {l_1},-{l_2}}E_{\bar \theta_{\tilde k},\bar \theta_{ k}}^{ {l_1},-{l_2}}}}\\
{ +\sum\nolimits_{{l_1} = 1}^L  {g_k^{ {l_1},0}E_{\bar \theta_{\tilde k}}^{ {l_1}}}  + \sum\nolimits_{{l_2} = 1}^L  {g_k^{0,  {l_2}}E_{\bar \theta_{ k}}^{ {l_2}}} }
\end{array}} \right\} + g_k^{0,0}} \right\},
\end{align}
where $g_k^{{l_1},{l_2}} = {g_k}\left( {{c_1} + {\rm i}{l_1}{\pi}/T,{c_2} + {\rm i}{l_2}{\pi}/T} \right)$.

Similarly to Theorem \ref{the:rand} and Theorem \ref{the:disb}, ${\varphi _k}\left( {s + t} \right)$ are obtained by considering the random and the distance-based policies.
\begin{theorem}
Under the random NOMA grouping policy, ${\varphi_{ k}}\left( s + t\right)$ is given by
\begin{align}\label{eqn:varphi_k_rand}
&{\varphi _{ k}}\left( s+t \right) = \frac{{2c}}{{2c + {\omega _{ k}}{(s+t)^{\frac{2}{\alpha }}}}}
H_{1,1}^{1,1}\left( {\left. {\frac{{{{\sigma_{ k} ^2}}}}{P}(s+t){{\left( {\pi{\lambda _b} \left( {2c + {\omega _{ k}}{(s+t)^{\frac{2}{\alpha }}}} \right)} \right)}^{ - \frac{\alpha }{2}}}} \right|\begin{array}{*{20}{c}}
{\left( {0,\frac{\alpha }{2}} \right)}\\
{\left( {0,1} \right)}
\end{array}} \right),
\end{align}
In the interference-limited regime, i.e., $\sigma^2/P\to 0$, ${\varphi _{ k}}\left( s \right)$ can be simplified as
\begin{equation}\label{eqn:varphi_obk}
{\varphi _{ k}}\left( s+t \right) = \frac{{2c}}{{2c + {\omega _{ k}}{(s+t)^{\frac{2}{\alpha }}}}}.
\end{equation}
\end{theorem}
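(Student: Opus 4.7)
The argument mirrors the derivation of Theorem~\ref{the:rand}, with the key change being that in each pair the near user is the one with the smaller link distance. Since the two users in a pair are i.i.d.\ under random grouping, $d_k = \min(d^{(1)}, d^{(2)})$ with $d^{(i)}$ following the Rayleigh law \eqref{eqn:pdf_d_approx}, so the standard order-statistics identity gives
\begin{equation*}
f_{d_k}(x) = 2\bigl(1 - F_d(x)\bigr) f_d(x) = 4c\lambda_b\pi\, x\, e^{-2c\lambda_b\pi x^2}.
\end{equation*}

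I would then substitute this PDF into the definition of $\varphi_k(s+t)$ embedded in \eqref{eqn:q_k_avg}. After the change of variable $y = x^2$ (so that $\ell(d_k)^{-1} = y^{\alpha/2}$), the inner expectation collapses to
\begin{equation*}
\varphi_k(s+t) = 2c\lambda_b\pi \int_0^\infty \exp\!\left(-\tfrac{\sigma_k^2}{P}(s+t)\, y^{\alpha/2} - \pi\lambda_b\bigl(\omega_k(s+t)^{2/\alpha} + 2c\bigr) y\right) dy.
\end{equation*}
In contrast to \eqref{eqn:varphitd}, only a single integral appears, with $2c$ inside the parentheses, because the factor $1-F_d(x)$ in $f_{d_k}$ already furnishes a single exponential rather than the difference of two exponentials that produced the two-term structure of ${\varphi}_{\tilde k}$.

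To close this integral, I would invoke the Parseval-type property of the Mellin transform used in the proof of Theorem~\ref{the:rand}, applied to the Mellin pairs $e^{-ay^{\alpha/2}} \leftrightarrow \tfrac{2}{\alpha}\Gamma(2s/\alpha)a^{-2s/\alpha}$ and $e^{-by}\leftrightarrow \Gamma(s)b^{-s}$, and then recognize the resulting Mellin--Barnes contour integral as the $H^{1,1}_{1,1}$ Fox function via \cite[Eq.~1.37]{mathai2009h}. This yields exactly \eqref{eqn:varphi_k_rand}. The interference-limited limit \eqref{eqn:varphi_obk} then follows by observing that the argument of the H-function vanishes as $\sigma^2/P \to 0$ and applying $H^{1,1}_{1,1}(0\,|\,\cdots) = 1$ from \cite[Theorem~1.2]{mathai2009h}, exactly as in the final step of the proof of Theorem~\ref{the:rand}.

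The heaviest lifting is clerical rather than conceptual: one must track the joint scaling of $\tfrac{\sigma_k^2}{P}(s+t)$ and $\pi\lambda_b(2c + \omega_k(s+t)^{2/\alpha})$ so that they package into the argument of $H^{1,1}_{1,1}$ in precisely the form displayed in \eqref{eqn:varphi_k_rand}. Every other step is a direct transcription of Theorem~\ref{the:rand}, with the sole substitution of $f_{d_{\tilde k}}$ by $f_{d_k}$.
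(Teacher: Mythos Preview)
Your proposal is correct and follows essentially the same route as the paper: the paper's proof simply records the minimum-order-statistic density $f_{d_k}(x)=2(1-F_d(x))f_d(x)$ and then states that \eqref{eqn:varphi_k_rand} follows by the same manipulations as in Theorem~\ref{the:rand}. Your write-up just makes those manipulations (the $y=x^2$ substitution, the single-term integral, the Mellin--Parseval step, and the $\sigma^2/P\to 0$ limit) explicit.
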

\begin{proof}
Similarly to \eqref{eqn:dkt_dis}, the PDF of ${{d_{ k}}}$ is given by
\begin{equation}\label{eqn:dk_dis}
{f_{{d_{ k}}}}\left( x \right) = 2\left( {1 - {F_d}(x)} \right){f_d}(x),
\end{equation}
By using \eqref{eqn:dk_dis}, \eqref{eqn:varphi_k_rand} can be obtained accordingly.
\end{proof}

\begin{theorem}\label{the:dis_k_t}
Under the distance-based NOMA grouping policy, ${\varphi_{ k}}\left( s + t \right)$ can be obtained similarly to \eqref{eqn:varphi_dis} by directly replacing $\tilde k$, $\sigma_{\tilde k}^2$, $r_{\tilde k}$ and $s$ with $ k$, $\sigma_{ k}^2$, $r_{ k}$ and $(s+t)$, respectively. Thereon, in the interference-limited regime, i.e., $\sigma^2/P\to 0$, ${\varphi _{ k}}\left( s \right)$ collapses to
\begin{align}\label{eqn:varphi_disd}
&{\varphi_{ k}}\left( s + t \right) = 
c{r_{ k}}{{2K}\choose{r_{ k}}}\sum\limits_{l = 0}^{{r_{ k}} - 1} {\frac{{{{\left( { - 1} \right)}^l}{{{r_{ k}} - 1}\choose{l}}}}{{\left( {2K - {r_{ k}} + l + 1} \right)c + {\omega _{ k}}{(s+t)^{\frac{2}{\alpha }}}}}},
\end{align}
where $r_{ k}$ denotes the ranking order of user $ k$ and $r_{ k} < r_{\tilde k}$.
\end{theorem}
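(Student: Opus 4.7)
The plan is to leverage the near-parallel structure between this theorem and Theorem \ref{the:disb}, so that most of the work is a bookkeeping translation rather than a fresh computation. The function to evaluate is
\begin{equation*}
\varphi_k(s+t)=\mathbb{E}_{d_k}\!\left\{ e^{-\frac{\sigma_k^2}{P\ell(d_k)}(s+t)-\pi\lambda_b\omega_k d_k^{\,2}(s+t)^{2/\alpha}}\right\}.
\end{equation*}
Because $(s+t)$ appears in every place where $s$ appeared in the far-user derivation, I would first observe that the map $s\mapsto s+t$ commutes with the expectation over $d_k$; hence any closed form obtained for the far user carries over verbatim once the appropriate distance distribution is inserted. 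This is exactly the substitution announced in the statement ($\tilde k\to k$, $\sigma_{\tilde k}^2\to\sigma_k^2$, $r_{\tilde k}\to r_k$, $s\to s+t$).

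Second, I would derive the PDF of $d_k$ under distance-based grouping. Since user $k$ is the $r_k$-th nearest of the $2K$ sorted users in the cell and $r_k<r_{\tilde k}$, order statistics applied to \eqref{eqn:pdf_d_approx} yield
\begin{equation*}
f_{d_k}(x)=r_k{\binom{2K}{r_k}}\bigl(F_d(x)\bigr)^{r_k-1}\bigl(1-F_d(x)\bigr)^{2K-r_k}f_d(x),
\end{equation*}
which is structurally identical to \eqref{eqn:dkt_order} with $r_{\tilde k}$ replaced by $r_k$. Substituting $F_d(x)=1-e^{-c\lambda_b\pi x^2}$, expanding $(1-e^{-c\lambda_b\pi x^2})^{r_k-1}$ by the binomial theorem, and changing variables $y=x^2$ reduces $\varphi_k(s+t)$ to a finite sum of integrals of the form
\begin{equation*}
\int_0^\infty e^{-\frac{\sigma_k^2}{P}(s+t)y^{\alpha/2}-\pi\lambda_b\bigl((2K-r_k+l+1)c+\omega_k(s+t)^{2/\alpha}\bigr)y}\,dy.
\end{equation*}
Each such integral is precisely the building block appearing in the proof of Theorem \ref{the:disb}; applying the Parseval-type identity for Mellin transforms together with \cite[Eq.\,1.37]{mathai2009h} expresses it as the Fox H-function displayed in \eqref{eqn:varphi_dis}, with $s$ replaced by $(s+t)$ everywhere. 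This establishes the first claim.

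Finally, for the interference-limited asymptote $\sigma^2/P\to 0$, the argument of every Fox H-function tends to zero, and the asymptotic expansion in \cite[Theorem\,1.2]{mathai2009h} gives $H^{1,1}_{1,1}(\cdot)\to 1$. The resulting rational sum is exactly \eqref{eqn:varphi_disd}. I do not foresee a serious obstacle here; the only subtlety worth flagging is to verify that the restriction $r_k<r_{\tilde k}$ (so that the near user really is the nearer of the pair) does not alter the order-statistics formula, since the binomial coefficient and the exponents depend only on $r_k$ itself and not on its relation to $r_{\tilde k}$. Hence the derivation is a clean re-run of the far-user argument under the single substitution $s\to s+t$.
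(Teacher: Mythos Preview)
Your proposal is correct and follows exactly the approach the paper intends: the paper's own proof simply states that the argument is parallel to that of Theorem~\ref{the:disb} and omits the details. Your elaboration (order-statistics density for $d_k$, binomial expansion, reduction to the same integral building blocks, and the Fox H-function asymptotics for $\sigma^2/P\to 0$) faithfully reconstructs that parallel derivation with the substitutions $\tilde k\to k$, $\sigma_{\tilde k}^2\to\sigma_k^2$, $r_{\tilde k}\to r_k$, $s\to s+t$.
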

\begin{proof}
The proof is parallel to that of Theorem \ref{the:disb} and hence omitted.
\end{proof}
Clearly, the average outage probability ${p_{ k}}$ is independent of the intensity of BSs because ${\varphi _{ k}}\left( s \right)$ is irrelevant to ${{\lambda _b}}$. 

\section{Asymptotic Outage Analysis}\label{sec:asy}
Unfortunately, the exact outage expressions are too cumbersome to gain extra helpful insights. To address this issue, this section seeks to investigate the asymptotic behavior of the outage probability as the quality of CSI improves, i.e., ${\mathcal K}_{zk} \to \infty$ or $\sigma_h^2 \to 0$. For simplicity and tractability, we specialize our analysis to the case of no inter-cell interference, i.e., $\lambda_b=0$. Similarly to \cite{park2012outage}, the tool of Chernoff bound is resorted to offer upper bounds for the outage probabilities. The asymptotic outage probabilities of the far and near users are individually discussed by splitting into two subsections.
\subsection{Asymptotic Outage Probability of the Far User $\tilde k$}
If $\lambda_b=0$, the conditional outage probability of user $\tilde k$ is
\begin{align}\label{eqn:out_mdash1}
{p_{\tilde k|\mathcal H}} &= \Pr \left\{ {{{\left( {{\bs{\tilde {\chi}}} + {\bs{\tilde \nu }}} \right)}^{\rm{H}}}\left( {{\bs{\tilde {\chi}}} + {\bs{\tilde \nu }}} \right) > {\tau _{\tilde k}} } \right\},
\end{align}
By using Chernoff bound, ${p_{\tilde k|\mathcal H}}$ is upper bounded by
\begin{equation}\label{eqn:chernb_kt}
{p_{\tilde k|{\cal H}}} \le {e^{ - {\tau _{\tilde k}}\tilde s}}\mathbb E\left\{ {{e^{\tilde s{{\left( {{\bs{\tilde {\chi}}} + {\bs{\tilde \nu }}} \right)}^{\rm{H}}}\left( {{\bs{\tilde {\chi}}} + {\bs{\tilde \nu }}} \right)}}} \right\},
\end{equation}
for any $\tilde s>0$. With the joint PDF of ${\bs{\tilde \chi}}$ and the important integral \eqref{eqn:identi_gassian_inte}, \eqref{eqn:chernb_kt} can be rewritten as
\begin{align}\label{eqn:pkh_rew}
{p_{\tilde k|{\cal H}}} \le \frac{{{e^{ - {\tau _{\tilde k}}\tilde s}}}}{{{\pi ^K}\det \left( {{\bf{\tilde \Sigma }}} \right)}}\int\limits_{{\mathbb C^K}} {{e^{\tilde s{{\left( {{\bf{x}} + {\bs{\tilde \nu }}} \right)}^{\rm{H}}}\left( {{\bf{x}} + {\bs{\tilde \nu }}} \right) - {{\bf{x}}^{\rm{H}}}{{{\bs{\tilde \Sigma }}}^{ - 1}}{\bf{x}}}}} d{\bf{x}}
 &= \frac{{{e^{ - {\tau _{\tilde k}}\tilde s}}}}{{\det \left( {{\bf{I}} - \tilde s{\bf{\tilde \Sigma }}} \right)}}{e^{{{{\bf{\tilde \nu }}}^{\rm{H}}}{{\left( {\frac{1}{\tilde s}{\bf{I}} - {\bf{\tilde \Sigma }}} \right)}^{ - 1}}{\bf{\tilde \nu }}}}\notag\\
 &= {e^{ - \left( {{\tau _{\tilde k}}\tilde s + \sum\limits_{i = 1}^K {\ln \left( {1 - \tilde s{\delta _{\tilde i}}} \right)}  + \sum\limits_{i = 1}^K {\frac{{\tilde s{{\left| {{\zeta _{\tilde i}}} \right|}^2}}}{{\tilde s{\delta _{\tilde i}} - 1}}} } \right)}},
\end{align}
where $\tilde s\in (0,\min\{{{1/\delta _{\tilde i}}},i\in[1,K]\})$. To circumvent the effect of ${\sigma _h^2}$ on $\tilde s$, we define ${\delta _{\tilde i}} = {\sigma _h^2}{\upsilon _{\tilde i}}$, where ${\delta _{\tilde i}}$ are the eigenvalues of ${\bs{\tilde \Sigma }} = {\sigma _h^2}{{\bf{u}}_{\tilde k}}^{\rm{H}}{{\bf{R}}_{r\tilde k}}{{\bf{u}}_{\tilde k}}{\left( {{{\bf{V}}^{\rm{H}}}{{\bf{R}}_{t\tilde k}}{\bf{V}}} \right)^{\rm{T}}}$. Accordingly, $\{\upsilon _{\tilde 1},\cdots,\upsilon _{\tilde K}\}$ are the eigenvalues of ${{\bf{u}}_{\tilde k}}^{\rm{H}}{{\bf{R}}_{r\tilde k}}{{\bf{u}}_{\tilde k}}{\left( {{{\bf{V}}^{\rm{H}}}{{\bf{R}}_{t\tilde k}}{\bf{V}}} \right)^{\rm{T}}}$. Hereby, \eqref{eqn:pkh_rew} can be expressed as
\begin{align}\label{eqn:pkh_rew1}
{p_{\tilde k|{\cal H}}} &\le {e^{ - \frac{{\bar s}}{{\sigma _h^2}} {\left( {{\tau _{\tilde k}} + \sum\limits_{i = 1}^K {\frac{{{{\left| {{\zeta _{\tilde i}}} \right|}^2}}}{{\bar s{\upsilon _{\tilde i}} - 1}}} } \right)}}} {{e^{ - \sum\limits_{i = 1}^K {\ln \left( {1 - \bar s{\upsilon _{\tilde i}}} \right)} }}},
\end{align}
where $\bar s = {\sigma _h^2}\tilde s \in (0,\min \{ 1/{\upsilon _{\tilde i}},i \in [1,K]\} )$. Clearly from \eqref{eqn:pkh_rew1}, the following theorem concerning the asymptotic behavior of ${p_{\tilde k|{\cal H}}}$ can be found.
\begin{theorem}\label{the:asyt}
As the channel uncertainty vanishes, i.e., ${\mathcal K}_{zk} \to \infty$ or ${{\sigma _h^2}} \to 0$, the outage probability ${p_{\tilde k|{\cal H}}}$ decays to zero if the target transmission rate satisfies the following sufficient condition
\begin{equation}\label{eqn:rtk_thre}
{R_{\tilde k}} < {\log _2}\left( {1 + \frac{{{\beta _{\tilde k}}^2{{\left| {{\mu _{\tilde k}}} \right|}^2}}}{{\sum\nolimits_{i = 1}^K {{{\left| {{\zeta _{\tilde i}}} \right|}^2}}  + {\beta _k}^2{\beta _{\tilde k}}^2{{\left| {{\mu _{\tilde k}}} \right|}^2}}+ \frac{{\sigma _{\tilde k}^2}}{{P\ell \left( {{d_{\tilde k}}} \right)}} }} \right).
\end{equation}
\end{theorem}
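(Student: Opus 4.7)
The plan starts from the Chernoff bound \eqref{eqn:pkh_rew1} and shows that, for $\sigma_h^2$ small enough, the right-hand side vanishes under the stated rate condition. The first thing to note is that $\tau_{\tilde k}$, the eigenvalues $\upsilon_{\tilde i}$, and the projection coefficients $\zeta_{\tilde i}$ are all independent of $\sigma_h^2$: $\tau_{\tilde k}$ is built only from $|\mu_{\tilde k}|^2$, the power-allocation coefficients, the rate $R_{\tilde k}$, and the normalized noise term $\sigma_{\tilde k}^2/(P\ell(d_{\tilde k}))$; the $\upsilon_{\tilde i}$'s are eigenvalues of the deterministic matrix ${\mathbf u}_{\tilde k}^{\rm H}{\mathbf R}_{r\tilde k}{\mathbf u}_{\tilde k}({\mathbf V}^{\rm H}{\mathbf R}_{t\tilde k}{\mathbf V})^{\rm T}$; and the $\zeta_{\tilde i}$'s are read off from its eigenvectors applied to $\bs{\tilde\nu}$. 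Writing \eqref{eqn:pkh_rew1} in the factored form $\exp\{-(\bar s/\sigma_h^2)\,\Omega(\bar s)\}\cdot\prod_{i=1}^K(1-\bar s\upsilon_{\tilde i})^{-1}$, with $\Omega(\bar s)\triangleq\tau_{\tilde k}+\sum_{i=1}^K|\zeta_{\tilde i}|^2/(\bar s\upsilon_{\tilde i}-1)$, the first factor decays exponentially in $1/\sigma_h^2$ whenever $\Omega(\bar s)>0$ for some admissible $\bar s$, while the second factor is $\sigma_h^2$-free and finite.

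Next I would optimize over $\bar s\in(0,\min_i 1/\upsilon_{\tilde i})$ to get the weakest -- hence sharpest -- sufficient condition on $R_{\tilde k}$. Because $\bar s\upsilon_{\tilde i}-1<0$ on this interval and $\bar s\mapsto 1/(\bar s\upsilon_{\tilde i}-1)$ is strictly decreasing, so is $\Omega$. Its supremum is therefore the one-sided limit $\Omega(0^+)=\tau_{\tilde k}-\sum_{i=1}^K|\zeta_{\tilde i}|^2$, and by continuity there exists an admissible $\bar s_0$ with $\Omega(\bar s_0)>0$ if and only if $\tau_{\tilde k}>\sum_{i=1}^K|\zeta_{\tilde i}|^2$. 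Substituting the definition of $\tau_{\tilde k}$ and isolating $R_{\tilde k}$ then yields \eqref{eqn:rtk_thre} verbatim.

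The main obstacle, modest though it is, is making the limit rigorous: $\bar s$ should be chosen as any fixed $\bar s_0>0$ with $\Omega(\bar s_0)>0$ \emph{before} $\sigma_h^2\to 0$ is taken, so that the prefactor $\bar s_0/\sigma_h^2$ diverges while the second factor is uniformly bounded by the $\sigma_h^2$-independent constant $\prod_i(1-\bar s_0\upsilon_{\tilde i})^{-1}$. Once this sequencing of limits is in place, the remainder is purely algebraic; no additional random-matrix machinery is required beyond the Chernoff step already executed in \eqref{eqn:chernb_kt}--\eqref{eqn:pkh_rew1}, and the same template will carry over mutatis mutandis to the near-user analysis.
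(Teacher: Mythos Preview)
Your proposal is correct and follows essentially the same route as the paper: start from the Chernoff bound \eqref{eqn:pkh_rew1}, require the exponent $\Omega(\bar s)=\tau_{\tilde k}+\sum_i|\zeta_{\tilde i}|^2/(\bar s\upsilon_{\tilde i}-1)$ to be positive, and take $\bar s\to 0^+$ to obtain the sharpest sufficient condition, which rearranges to \eqref{eqn:rtk_thre}. Your treatment is in fact more careful than the paper's one-line proof, since you make explicit the monotonicity of $\Omega$, the continuity argument guaranteeing an admissible $\bar s_0>0$, and the correct sequencing of fixing $\bar s_0$ before sending $\sigma_h^2\to 0$.
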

\begin{proof}
As ${{\sigma _h^2}} \to 0$, the right-hand side (RHS) of the inequality \eqref{eqn:pkh_rew1} approaches to zero only if ${{\tau _{\tilde k}} + \sum\nolimits_{i = 1}^K {{{{{\left| {{\zeta _{\tilde i}}} \right|}^2}}}/{({\bar s{\upsilon _{\tilde i}} - 1})}} }  >0$. %
By using the definition of ${\tau _{\tilde k}}$ together with some rearrangements, setting $\bar s=0$ yields the upper bound of ${R_{\tilde k}}$.
\end{proof}
It should be highlighted that Theorem \ref{the:asyt} differs from the results in \cite{park2012outage}, where only a single data stream is considered. Besides, a loose upper bound of ${R_{\tilde k}}$ is provided in \cite{park2012outage}. Particularly, the simulation results in Section \ref{label:num} justify the correctness of Theorem \ref{the:asyt}. However, it is worth mentioning that the same property does not carry over to the case of $\lambda_b\ne0$, that is, an arbitrary low outage target cannot be warranted for any nonzero target rates.

\subsection{Asymptotic Outage Probability of the Near User $k$}
From \eqref{eqn:out_m}, ${p_{k|{\cal H}}}$ is upper bounded by using inclusion-exclusion principle as
\begin{align}\label{eqn:plcalHup}
{p_{k|{\cal H}}} \le& \Pr \left( {{{\log }_2}\left( {1 + {\rm{SIN}}{{\rm{R}}_{k \to \tilde k}}} \right) < {R_{\tilde k}}} \right)
 + \Pr \left( {{{\log }_2}\left( {1 + {\rm{SIN}}{{\rm{R}}_k}} \right) < {R_k}} \right).
\end{align}
By setting $\lambda_b=0$ together with the definitions ${{\bs{\nu }}_1} = ( {\mu _1}, \cdots ,{\mu _{k - 1}},{\beta _k}^2{\mu _k},{\mu _{k + 1}}, \cdots ,{\mu _K})^{\rm{T}}$ ${{\bs{\nu }}_2} = ( {\mu _1}, \cdots ,{\mu _{k - 1}},0,{\mu _{k + 1}}, \cdots ,{\mu _K} )^{\rm{T}}$, \eqref{eqn:plcalHup} can be rewritten as
\begin{align}\label{eqn:pkhup_rew}
{p_{k|{\cal H}}} \le& \Pr \left( {{{\left( {{\bs{\chi }} + {{\bs{\nu }}_1}} \right)}^{\rm{H}}}\left( {{\bs{\chi }} + {{\bs{\nu }}_1}} \right) > {\theta _{\tilde k}} - {\beta _k}^2{\beta _{\tilde k}}^2{{\left| {{\mu _k}} \right|}^2}} \right)
 + \Pr \left( {{{\left( {{\bs{\chi }} + {{\bs{\nu }}_2}} \right)}^{\rm{H}}}\left( {{\bs{\chi }} + {{\bs{\nu }}_2}} \right) > {\theta _k}} \right),
\end{align}
where the first term on the RHS of the inequality is obtained similarly to \eqref{eqn:p_out_cnd_far_simp}.  In analogous to \eqref{eqn:chernb_kt} and \eqref{eqn:pkh_rew}, applying Chernoff bound to \eqref{eqn:pkhup_rew} gives rise to
\begin{align}\label{eqn:pkh_simp}
{p_{k|{\cal H}}} <& {e^{ - \left( {\left( {{\theta _{\tilde k}} - {\beta _k}^2{\beta _{\tilde k}}^2{{\left| {{\mu _k}} \right|}^2}} \right){s} + \sum\limits_{i = 1}^K {\ln \left( {1 - {s}{\delta _i}} \right)}  + \sum\limits_{i = 1}^K {\frac{{{s}{{\left| {{{\bs{\psi }}_i}{{\bs{\nu }}_1}} \right|}^2}}}{{s{\delta _i} - 1}}} } \right)}} 
+ {e^{ - \left( {{\theta _k}{s} + \sum\limits_{i = 1}^K {\ln \left( {1 - {s}{\delta _i}} \right)}  + \sum\limits_{i = 1}^K {\frac{{{s}{{\left| {{{\bs{\psi }}_i}{{\bs{\nu }}_2}} \right|}^2}}}{{s{\delta _i} - 1}}} } \right)}},
\end{align}
where $s \in (0,\min\{{{1/\delta _{ i}}},i\in[1,K]\})$.

Similarly to \eqref{eqn:pkh_rew1}, we denote the eigenvalues of ${{\bf{u}}_{ k}}^{\rm{H}}{{\bf{R}}_{r k}}{{\bf{u}}_{ k}}{\left( {{{\bf{V}}^{\rm{H}}}{{\bf{R}}_{t k}}{\bf{V}}} \right)^{\rm{T}}}$ by $\{\upsilon _{ 1},\cdots,\upsilon _{ K}\}$. Thus, we have ${\delta _{ i}} = {\sigma _h^2}{\upsilon _{ i}}$, where ${\delta _{ i}}$ are the eigenvalues of ${\bf{ \Sigma }}$. \eqref{eqn:pkh_simp} can thus be rewritten as
\begin{align}\label{eqn:pkh_bdfinaexp}
{p_{k|{\cal H}}} < {e^{ - \sum\limits_{i = 1}^K {\ln \left( {1 - \hat s{\upsilon _i}} \right)} }}\left( {e^{ - \frac{{\hat s}}{{\sigma _h^2}}\left( {{\theta _{\tilde k}} - {\beta _k}^2{\beta _{\tilde k}}^2{{\left| {{\mu _k}} \right|}^2}+ \sum\limits_{i = 1}^K {\frac{{{{\left| {{{\bs{\psi }}_i}^{\rm H}{{\bs{\nu }}_1}} \right|}^2}}}{{\hat s{\upsilon _i} - 1}}} } \right)}} \right. 
 \left. + {e^{ - \frac{{\hat s}}{{\sigma _h^2}}\left( {{\theta _k} + \sum\limits_{i = 1}^K {\frac{{{{\left| {{{\bs{\psi }}_i}^{\rm H}{{\bs{\nu }}_2}} \right|}^2}}}{{\hat s{\upsilon _i} - 1}}} } \right)}} \right),
\end{align}
where $\hat s = {\sigma _h^2} s \in (0,\min \{ 1/{\upsilon _{ i}},i \in [1,K]\} )$. On the basis of \eqref{eqn:pkh_bdfinaexp}, the following theorem regarding the asymptotic behavior of ${p_{k|{\cal H}}}$ can be proved.
\begin{theorem}\label{the:upp2}
As the channel uncertainty vanishes, i.e., ${\mathcal K}_{zk} \to \infty$ or ${{\sigma _h^2}} \to 0$, the outage probability ${p_{k|{\cal H}}}$ converges to zero if the target transmission rates satisfy the following sufficient conditions
\begin{equation}\label{eqn:rtk_thre1}
{R_{\tilde k}} < {\log _2}\left( {1 + \frac{{{\beta _{\tilde k}}^2{{\left| {{\mu _k}} \right|}^2}}}{{\sum\nolimits_{i = 1}^K {{{\left| {{\bs \psi _i}^{\rm H}{\bs \nu _1}} \right|}^2}}   + {\beta _k}^2{\beta _{\tilde k}}^2{{\left| {{\mu _k}} \right|}^2}}+ \frac{{\sigma _k^2}}{{P\ell \left( {{d_k}} \right)}}}} \right),
\end{equation}
\begin{equation}\label{eqn:rk_bound}
{R_k} < {\log _2}\left( {1 + \frac{{{{\left| {{\mu _k}} \right|}^2}{\beta _k}^2}}{{\sum\nolimits_{i = 1}^K {{{\left| {{{\bs{\psi }}_i}^{\rm H}{{\bs{\nu }}_2}} \right|}^2}} +\frac{{\sigma _k^2}}{{P\ell \left( {{d_k}} \right)}}  }}} \right).
\end{equation}
\end{theorem}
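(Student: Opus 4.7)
The plan is to mirror the proof of Theorem \ref{the:asyt}: the Chernoff-type upper bound \eqref{eqn:pkh_bdfinaexp} presents $p_{k|\mathcal H}$ as a sum of two exponentials whose arguments both scale like $\hat s/\sigma_h^2$, so forcing each exponential to vanish as $\sigma_h^2 \to 0$ reduces to requiring strict positivity of the bracketed quantities multiplying $\hat s/\sigma_h^2$ inside each exponent. The prefactor $\exp(-\sum_i \ln(1-\hat s \upsilon_i))$ is bounded for any admissible $\hat s \in (0,\min\{1/\upsilon_i\})$, so it does not contribute to the limiting behavior.

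First I would treat the two exponential summands independently. For the SIC-cancellation term, the bracket to make positive is $\Theta_1(\hat s) = \theta_{\tilde k} - \beta_k^2\beta_{\tilde k}^2|\mu_k|^2 + \sum_{i=1}^K |\bs{\psi}_i^{\rm H}\bs{\nu}_1|^2/(\hat s\upsilon_i - 1)$; for the self-decoding term, the bracket is $\Theta_2(\hat s) = \theta_k + \sum_{i=1}^K |\bs{\psi}_i^{\rm H}\bs{\nu}_2|^2/(\hat s\upsilon_i - 1)$. Because each rational correction is strictly negative in the admissible range, the binding case is $\hat s \to 0^+$, where the rational sums collapse to $-\sum_i |\bs{\psi}_i^{\rm H}\bs{\nu}_j|^2$. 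Substituting the definitions $\theta_{\tilde k} = \beta_{\tilde k}^2|\mu_k|^2/(2^{R_{\tilde k}}-1) - \sigma_k^2/(P\ell(d_k))$ and $\theta_k = \beta_k^2|\mu_k|^2/(2^{R_k}-1) - \sigma_k^2/(P\ell(d_k))$, then isolating the rate on the left-hand side of each inequality, reproduces \eqref{eqn:rtk_thre1} and \eqref{eqn:rk_bound} verbatim.

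The delicate step, as in Theorem \ref{the:asyt}, is reconciling the two roles of the Chernoff parameter: $\hat s \to 0^+$ gives the sharpest rate threshold, but the exponent $-\hat s\,\Theta_j(\hat s)/\sigma_h^2$ also has $\hat s$ in its leading factor, so one must not literally set $\hat s = 0$. The resolution is to argue the two limits sequentially: the rate conditions \eqref{eqn:rtk_thre1}–\eqref{eqn:rk_bound} are obtained as the $\hat s \to 0^+$ limits of the positivity requirements $\Theta_j(\hat s) > 0$, and by continuity they imply that some fixed strictly positive $\hat s^\star$ in the admissible interval also achieves $\Theta_j(\hat s^\star) > 0$. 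Fixing such an $\hat s^\star$ makes $\hat s^\star \Theta_j(\hat s^\star)$ a strictly positive constant independent of $\sigma_h^2$, so the ratio $\hat s^\star \Theta_j(\hat s^\star)/\sigma_h^2 \to \infty$ as $\sigma_h^2 \to 0$, and both exponentials in \eqref{eqn:pkh_bdfinaexp} vanish. Combining this with the union bound \eqref{eqn:plcalHup} yields $p_{k|\mathcal H} \to 0$, completing the argument.
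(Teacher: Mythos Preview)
Your proposal is correct and follows essentially the same approach as the paper's own proof: both start from the Chernoff bound \eqref{eqn:pkh_bdfinaexp}, require strict positivity of the two bracketed quantities, and then let $\hat s\to 0^{+}$ to read off the rate thresholds \eqref{eqn:rtk_thre1}--\eqref{eqn:rk_bound}. Your write-up is in fact more careful than the paper's, which simply states ``the upper bounds of $R_{\tilde k}$ and $R_k$ can be obtained by setting $\hat s=0$'' without spelling out the continuity argument you supply to justify the existence of a fixed admissible $\hat s^\star>0$ at which both brackets remain strictly positive.
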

\begin{proof}
As ${{\sigma _h^2}} \to 0$, the RHS of the inequality \eqref{eqn:pkh_bdfinaexp} approaches to zero only if ${\theta _{\tilde k}} - {\beta _k}^2{\beta _{\tilde k}}^2{\left| {{\mu _k}} \right|^2} + \sum\nolimits_{i = 1}^K {{{{{\left| {{{\bf{\psi }}_i}{{\bf{\nu }}_1}} \right|}^2}}}/{({\hat s{\upsilon _i} - 1})}}  > 0$ and ${\theta _k} + \sum\nolimits_{i = 1}^K {{{{{\left| {{{\bf{\psi }}_i}{{\bf{\nu }}_2}} \right|}^2}}}/{({\hat s{\upsilon _i} - 1})}}  > 0$. By using the definitions of ${\theta _{\tilde k}}$ and ${\theta _{ k}}$, the upper bounds of ${R_{\tilde k}}$ and ${R_{ k}}$ can be obtained by setting $\hat s=0$.
\end{proof}
\section{Optimal System Design}\label{sec:opt}
By making use of the foregoing analytical results, the parameters of MIMO-NOMA enhanced SCNs can be properly configured to adapt to the imperfect fading channels while ensuring the quality of service, including the precoding matrix (i.e., $\bf V$), receiver filters (i.e., ${\bf u}_k, {\bf u}_{\tilde k}, k\in[1,K]$), power allocation coefficients (i.e., $\beta_k$, $\beta_{\tilde k}, k\in[1,K]$) and the transmission rates (i.e., $R_k$, $R_{\tilde k}, k\in[1,K]$). To exemplify this, we confine our attention to maximize the long-term goodput that measures the number of successfully conveyed information bits per transmission\cite{shi2019zero,shi2018energy,rui2008combined}. Given the known channel information $\mathcal H $, the conditional goodput of the MIMO-NOMA system is expressed as
\begin{align}\label{eqn:T_g_def}
{\mathcal T_{g|{\mathcal H}}} &= \sum\limits_{k = 1}^K {{R_k}( {1 - {p_{k|\mathcal H}}} ) + {R_{\tilde k}}( {1 - {p_{\tilde k|\mathcal H}}} )}. 
\end{align}
Thereon, the maximization of goodput while guaranteeing low outage constraints is posed by
\begin{equation}\label{eqn:prob_def}
\begin{array}{*{20}{c l}}
{\mathop {\max }\limits_{{\bf{V}},\{{{\bf{u}}_k},{{\bf{u}}_{\tilde k}},{R_k},{R_{\tilde k}}:k\in[1,K]\}} }&{{\mathcal T_{g|{\cal H}}}}\\
{{\rm{s}}{\rm{.t}}{\rm{.}}}&{{p_{k|{\cal H}}} \le \varepsilon ,{p_{\tilde k|{\cal H}}} \le \varepsilon ,k \in [1,K]}\\
{}&{\left\| {{{\bf{v}}_k}} \right\| = 1,k \in [1,K]}\\
{}&{{{\beta _k}^2}\left( {{2^{{R_{\tilde k}}}} - 1} \right) < 1,k \in [1,K]}
\end{array},
\end{equation}
where $\varepsilon$ denotes the maximum endurable outage probability. In general, the power allocation coefficients $\beta_k$ and transmission rates $R_k$, $R_{\tilde k}$ cannot be jointly optimized in \eqref{eqn:prob_def} in order to not violate the intention of NOMA principle. This is because as disclosed by \cite{shi2018cooperative}, the joint optimization of $\beta_k$ and $R_k$\&$R_{\tilde k}$ would behave like waterfilling algorithm if the user fairness is disregarded, that is, all the power would be allocated to the users under benign channel conditions and no information bits would be delivered to the users experienced bad channel conditions. In light of this observation, we fix the power allocation coefficient to maintain the user fairness in the sequel.

Unfortunately, due to the complex outage expressions, it is virtually impossible to get the globally optimal solution to \eqref{eqn:prob_def}. To tackle this difficulty, the concept of the signal/interference alignment proposed by \cite{ding2016general} is applied to choose an appropriate ${\bf{V}}$, ${{\bf{u}}_k}$ and ${{\bf{u}}_{\tilde k}}$ with low implementation overhead\footnote{Intrinsically, the signal alignment can be regarded as a linear decorrelator \cite[Sec. 8.3.1]{tse2005fundamentals}, which is a combination of the projection operation and matched filter. The linear decorrelator is frequently used to suppress the inter-stream interference.}. In particular, in order to substantially suppress the inter-pair interference, it is necessary to impose the constraints on
${\mu _{i}}={\mu _{\tilde i}}=0,i\ne k$. This leads to
\begin{equation}\label{eqn:inter_satis}
\left( {\begin{array}{*{20}{c}}
{{{\bf{u}}_{ k}}^{\rm{H}}{{{\bf{\hat H}}}_{z k}}}\\
{{{\bf{u}}_{\tilde k}}^{\rm{H}}{{{\bf{\hat H}}}_{z\tilde k}}}
\end{array}} \right){{\bf{v}}_i}={\bf 0},\, i \ne k, k\in [1,K],
\end{equation}
In order to capitalize on the signal alignment, we apply the following decomposition ${\bf V}={\bf L}{\bf P}$, where ${\bf L}\in{\mathbb C}^{M\times K}$, ${\bf L}^{\rm H}{\bf L}={\bf I}_K$ and ${\bf P}\in{\mathbb C}^{K\times K}$. In some sense, the involvement of ${\bf L}$ is equivalent to choose $K$ transmit antennas among $M$ ones. Hence, a similar algorithm as suggested in \cite{ding2016general} can be developed to obtain ${\bf L}$, as shown in Algorithm \ref{alg:rs}. Furthermore, by defining ${{\bf{g}}_k}^{\rm{H}} = {{\bf{u}}_k}^{\rm{H}}{{{\bf{\hat H}}}_{zk}}{\bf{L}}$ and ${{\bf{g}}_{\tilde k}}^{\rm{H}} = {{\bf{u}}_{\tilde k}}^{\rm{H}}{{{\bf{\hat H}}}_{z\tilde k}}{\bf{L}}$, we have ${\left( {{{\bf{g}}_k},{{\bf{g}}_{\tilde k}}} \right)^{\rm{H}}}{{\bf{p}}_i} = {\bf{0}}$, where ${\bf P} = ({{\bf{p}}_1},\cdots,{{\bf{p}}_K})$.
To ensure the existence of ${{\bf{p}}_i}$, the prerequisite of undertaking the signal alignment is ${{\bf{g}}_k}={{\bf{g}}_{\tilde k}}$, which ensures the existence of both ${{\bf{u}}_k}$ and ${{\bf{u}}_{\tilde k}}$. 
As proved by \cite{ding2016general}, ${{{\bf{u}}_k}}$ and ${{{\bf{u}}_{\tilde k}}}$ are thus given by
\begin{equation}\label{eqn:u_expli}
\left( {\begin{array}{*{20}{c}}
{{{\bf{u}}_k}}\\
{{{\bf{u}}_{\tilde k}}}
\end{array}} \right)={\bf U}_k {\bf z}_k,
\end{equation}
where ${\bf U}_k \in {\mathbb C}^{2N\times (2N-K)}$ consists of the $(2N-K)$ right singular vectors of $({{{( {{{{\bf{\hat H}}}_{zk}}{\bf{L}}} )}^{\rm{H}}}, - {{( {{{{\bf{\hat H}}}_{z\tilde k}}{\bf{L}}} )}^{\rm{H}}}})$ corresponding to its zero singular values, ${\bf z}_k \in {\mathbb C}^{(2N-K)\times 1}$ and $|{\bf z}_k|^2=1$. In \cite{ding2016general}, an algorithm with low complexity has been provided to properly choose ${\bf z}_k$. After determining ${{{\bf{u}}_k}}$ and ${{{\bf{u}}_{\tilde k}}}$, ${{\bf{g}}_{ k}}$ and ${{\bf{g}}_{\tilde k}}$ are obtained. Accordingly, ${\bf P}$ is then given by ${\bf{P}} = {{\bf{G}}^{ - {\rm{H}}}}{\bf{D}}$,
where ${\bf{G}} = ({\bf g}_1,\cdots,{\bf g}_K)$ and ${\bf{D}}={\rm diag}(1/\sqrt{({{\bf{G}}^{ - 1}}{{\bf{G}}^{ - {\rm{H}}}})_{1,1}},\cdots,1/\sqrt{({{\bf{G}}^{ - 1}}{{\bf{G}}^{ - {\rm{H}}}})_{K,K}})$ ensures $\left\| {{{\bf{v}}_k}} \right\| = \left\| {{{\bf{p}}_k}} \right\| = 1$. Therefore, ${\bf V}$ can be calculated as
\begin{equation}\label{eqn:V_fina}
{\bf V}={\bf L}{{\bf{G}}^{ - {\rm{H}}}}{\bf{D}}.
\end{equation}
An algorithm for the design of ${\bf V}$ is listed in Algorithm \ref{alg:rs}.
\begin{algorithm}
   \caption{The selection of the precoding matrix ${\bf{V}}$}\label{alg:rs}
    \begin{algorithmic}[1]
        \For{$i=1$ to $M!/(M-K)!$}
        \State ${\bf L}\gets {\bf Y}_i$, ${\bf Y}_i$ is a matrix of size ${M\times K}$ that contains exactly one entry of 1 in each \par column, at most one entry of 1 in each row and 0s elsewhere;
        \State Find ${{\bf{u}}_k},{{\bf{u}}_{\tilde k}}$ by using \cite[Algorithm 1]{ding2016general};
        \State  Construct ${\bf{G}} = ({\bf g}_1,\cdots,{\bf g}_K)^{\rm{H}}$,
         where ${{\bf{g}}_k}^{\rm{H}} = {{\bf{u}}_k}^{\rm{H}}{{{\bf{\hat H}}}_{zk}}{\bf{L}}$ and $k\in [1,K]$;
        \vspace{0.1em}
        \State Get the effective channel gain for each user pair, $\gamma_{k} = 1/({{\bf{G}}^{ - 1}}{{\bf{G}}^{ - {\rm{H}}}})_{k,k}$;
        \State Find the smallest effective channel gain, i.e., $\gamma_{\min,i}=\min\{\gamma_{1},\cdots,\gamma_{K}\}$;
      \EndFor
      \State Find the index $i$ that maximizes the smallest effective channel gain, i.e., \par $i^* = \mathop {\arg }\limits_{i\in[1,M!/(M-K)!]}  \max  \gamma_{\min,i}$;
      \State ${\bf V} \gets {\bf L}{{\bf{G}}^{ - {\rm{H}}}}{\bf{D}}$, where ${\bf L}={\bf Y}_{i^*}$.
        \end{algorithmic}
\end{algorithm}

After deciding ${\bf{V}}$, ${{\bf{u}}_k}$ and ${{\bf{u}}_{\tilde k}}$, the original problem of \eqref{eqn:prob_def} collapses to
\begin{equation}\label{eqn:prob_defsimp}
\begin{array}{*{20}{c l}}
{\mathop {\max }\limits_{\{{R_k},{R_{\tilde k}}:k\in[1,K]\}} }&{ \sum\limits_{k = 1}^K {{R_k}( {1 - {p_{k|\mathcal H}}} ) + {R_{\tilde k}}( {1 - {p_{\tilde k|\mathcal H}}} )} }\\
{{\rm{s}}{\rm{.t}}{\rm{.}}}&{{p_{k|{\cal H}}} \le \varepsilon ,{p_{\tilde k|{\cal H}}} \le \varepsilon ,k \in [1,K]}\\
{}&{{{\beta _k}^2}\left( {{2^{{R_{\tilde k}}}} - 1} \right) < 1,k \in [1,K]}
\end{array}.
\end{equation}
By noticing that the objective function of \eqref{eqn:prob_defsimp} is a sum of ${{{R_k}( {1 - {p_{k|\mathcal H}}} ) + {R_{\tilde k}}( {1 - {p_{\tilde k|\mathcal H}}} )}}$ for all $k\in [1,K]$, ${p_{k|{\cal H}}}$ and ${p_{\tilde k|{\cal H}}}$ only depend on ${R_k}$ and ${R_{\tilde k}}$ (independent of ${R_i}$ and ${R_{\tilde i}}$ for $i\ne k$), \eqref{eqn:prob_defsimp} can further be decoupled as $K$ independent subproblems, i.e.,
\begin{equation}\label{eqn:prob_defsimpdec}
\begin{array}{*{20}{c l}}
{\mathop {\max }\limits_{{R_k},{R_{\tilde k}}} }&{{{R_k}( {1 - {p_{k|\mathcal H}}} ) + {R_{\tilde k}}( {1 - {p_{\tilde k|\mathcal H}}} )}}\\
{{\rm{s}}{\rm{.t}}{\rm{.}}}&{{p_{k|{\cal H}}} \le \varepsilon ,{p_{\tilde k|{\cal H}}} \le \varepsilon}\\
{}&{{{\beta _k}^2}\left( {{2^{{R_{\tilde k}}}} - 1} \right) < 1}
\end{array}, k\in[1,K].
\end{equation}
\eqref{eqn:prob_defsimpdec} can be solved with many popular programming tools, such as sequential convex programming (SCP), interior-point method, etc.\footnote{To examine the computational complexity of \eqref{eqn:prob_defsimpdec}, the interior-point method is taken as an example. Since there are only two variables to be optimized in each subproblem, the computational complexity of each subproblem is mainly affected by the number of Newton iterations. According to the complexity analysis of the interior-point method in \cite{boyd2004convex}, the total number of Newton iterations is linearly proportional to $\left\lceil {\frac{{\ln \left( {m/\left( {{t^{\left( 0 \right)}\epsilon}} \right)} \right)}}{{\ln \mu }}} \right\rceil $, where $m$ is the number of inequality constraints, $\mu $ is the increasing factor, $t^{(0)}$ is the initial value of the parameter that sets the accuracy of the barrier approximation, $\epsilon$ specifies the final duality gap and $\left\lceil \cdot \right\rceil$ denotes the ceil operation. Accordingly, the computational complexity of solving the $K$ independent subproblems is linearly proportional to $K\left\lceil {\frac{{\ln \left( {m/\left( {{t^{\left( 0 \right)}\epsilon}} \right)} \right)}}{{\ln \mu }}} \right\rceil $. In fact, ${\ln \left( {m/\left( {{t^{\left( 0 \right)}\epsilon}} \right)} \right)}$ refers to the logarithm of the ratio of the initial duality gap $m/t^{\left( 0 \right)}$ to the final duality gap $\epsilon$. In practice, the factor $\mu $ is usually assigned with a larger value around 2 to 100, which leads to the total number of iterations on the order of a few tens. As a consequence, the computational complexity of solving \eqref{eqn:prob_defsimpdec} is very low.}.

\section{Numerical Analysis}\label{label:num}
In this section, the simulation results are presented for verifications and discussions. Unless otherwise specified, the simulation parameters are tabulated in Table \ref{tab:list_symb}\footnote{The intention behind the linear relationship between $R_k$ and $R_{\tilde k}$ is to plot 2-D figures for the purpose of clear illustrations.}, where an exponential correlation profile is used to model the covariance matrices ${\bf R}_{rk}=(\kappa^{|i-j|})_{1\le i,j\le N}$ and ${\bf R}_{tk}=(\kappa^{|i-j|})_{1\le i,j\le M}$\cite{park2012outage}. For illustration, the pairing policy in \cite{ding2016application,ali2016dynamic} is considered for the distanced-based grouping policy, where the farthest user is paired with the nearest user, and the rest of users are paired in the same fashion. Henceforth, the ranking orders with respect to the distance-based NOMA grouping policy are given by $r_1,r_{\tilde 1},r_2,r_{\tilde 2}=1,4,2,3$. Besides, in the succeeding examples, the technique of the signal alignment is applied to relieve the intra-cell interference.
\begin{table}
  \centering
  \caption{Simulation Parameters.}
\label{tab:list_symb}
\begin{tabular}{|c | c || c | c| }
  \hline
 \textbf{Parameters} & \textbf{Values} & \textbf{Parameters} & \textbf{Values}\\
 \hline
 The number of transmit antennas $M$& 3 &Gaussian noise power $\sigma^2$ & -$99$ dBm\\
 \hline
 The number of receive antennas $N$& 2 & Distances $d_{\tilde k}=2.5d_k$ & 50 m\\
 \hline
 The number of user pairs $K$& 2 & Target rates $R_k=2R_{\tilde k}$ & 1 bps/Hz\\
 \hline
 Path loss exponent $\alpha$& 3.5 & Coverage radius of simulated network& 5 km\\
 \hline
 Intensity of users $\lambda_u$ & $2\times 10^{-4}$ users/m${}^2$ & Exponential correlation coefficient $\kappa $ & 0.9\\
 \hline
 Intensity of BSs $\lambda_b$ & $10^{-5}$ BSs/m${}^2$ & Channel K factor $\mathcal K_{zk}$ & 20 dB\\
 \hline
 Transmit power at BS $P$ & $20$ dBm & 2-norm of the estimated CSI ${\| {\hat {\bf H}_{zk}} \|_F^2}$ & $MN$\\
 \hline
 Interference power $\rho_I$ & $15$ dBm & Power allocation factor $\beta_k^2$ & 0.3\\
\hline
\end{tabular}
\end{table}
\subsection{Verifications}
Fig. \ref{fig:corate} examines the conditional outage probability $p_{k|\mathcal H}$ against the transmission rate $R_{\tilde k}$. In Fig. \ref{fig:corate}, the approximate results of $p_{k|\mathcal H}$ (labeled as ``$k$=1-Approx.'') are obtained by using \eqref{eqn:pkupper}. Clearly, the analytical results are in perfect agreement with the simulation results. In addition, the conditional outage probability increases with the transmission rates. This is due to the tradeoff between the system throughput and reliability. Besides, a slight gap between the approximate and simulated outage probabilities of the near users can be observed. Furthermore, it is not hard to find that the exact outage probability is always upper bounded by the approximate one. This is due to the fact that the strong correlation between ${\rm{SIN}}{{\rm{R}}_{k \to \tilde k}} $ and ${\rm{SIN}}{{\rm{R}}_k}$ renders the near user having a high successful probability to reconstruct its own message if it successfully subtracts the far user's message. The same trends can be observed in Fig. \ref{fig:cokch}.

\begin{figure}
  \centering
  \includegraphics[width=3.5in]{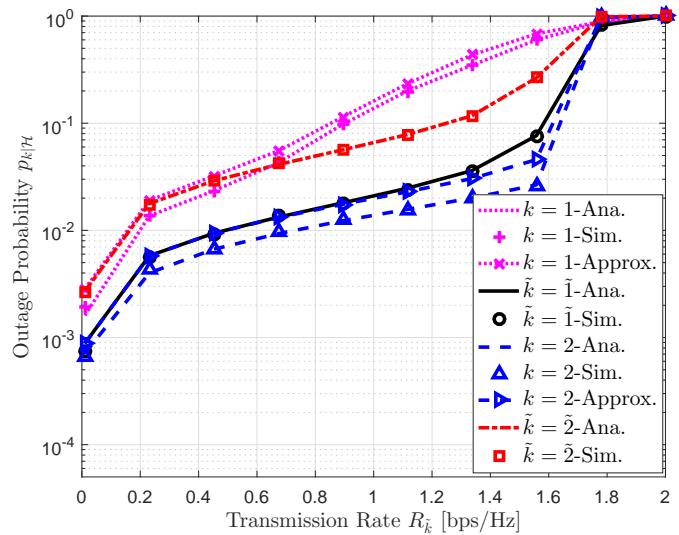}
  \caption{The conditional outage probability $p_{k|\mathcal H}$ versus the transmission rate $R_{\tilde k}$ by setting parameters as $R_{ k}=2R_{\tilde k}$.}\label{fig:corate}
\end{figure}

Fig. \ref{fig:cokch} plots the conditional outage probability $p_{k|\mathcal H}$ against the channel K factor ${\cal K}_{zk}$. As expected, the analytical results coincide with the simulation ones. Moreover, as ${\cal K}_{zk}$ increases, the outage probability decays to an outage floor. It is no doubt that the outage floor is determined by the co-channel interference from surrounding BSs.
\begin{figure}
  \centering
  \includegraphics[width=3.5in]{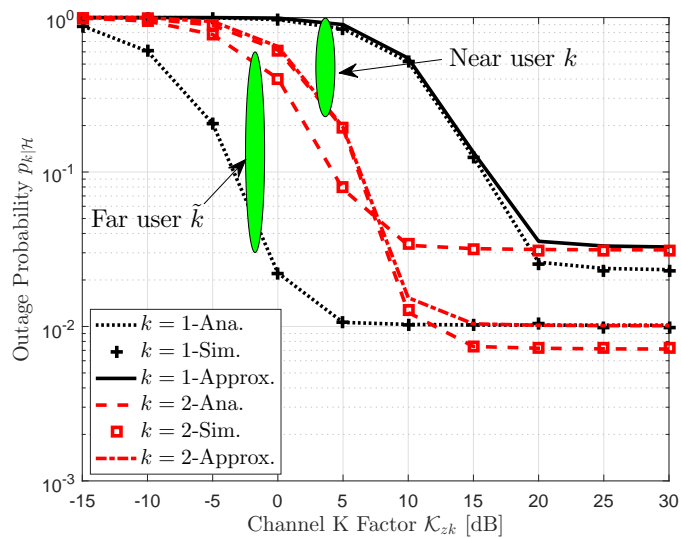}
  \caption{The conditional outage probability $p_{k|\mathcal H}$ versus the channel K factor ${\cal K}_{zk}$ by setting $R_{ k}=2R_{\tilde k}=1$ bps/Hz.}\label{fig:cokch}
\end{figure}

Fig. \ref{fig:aorate} investigates the average outage probability $p_{k}$ against the transmission rate $R_{\tilde k}$. In both Figs. \ref{fig:aorate} and \ref{fig:aolambda}, the exact and approximate outage probabilities of the random grouping policy (labeled as ``Random-Exact'' and ``Random-Approx.'') for far (or near) NOMA user are calculated by using \eqref{eqn:varphi_tdfin} (or \eqref{eqn:varphi_k_rand}) and \eqref{eqn:varphi_ob} (or \eqref{eqn:varphi_obk}), respectively. Moreover, the simulated outage probability (labeled as ``Random-Sim.'') is plotted for verification. Similarly, the numerical results for the distance-based grouping policy can be obtained according to Theorems \ref{the:disb} and \ref{the:dis_k_t}, and the corresponding exact, approximate and simulated outage probabilities are labeled as ``Distance-Exact'', ``Distance-Approx.'' and ``Distance-Sim.'', respectively. From Fig. \ref{fig:aorate}, we observe a perfect match between the analytical and simulation results, which corroborates the validity of our analysis. It can also be seen that the distance-based grouping policy performs better than the random grouping policy in terms of the outage probability of the near user, while the impact on the outage probability of the far user behaves differently. This is due to the fact that the near/far user under the distance-based grouping policy is closer to/farther away from its associated BS than that under random grouping policy. Nevertheless, it is noteworthy that the random grouping policy is frequently used for benchmarking purpose, because it has been intensively investigated in \cite{zhu2018optimal} and \cite{islam2018resource} that the random grouping policy provides the lowest sum rate gain.
\begin{figure}
  \centering
  \includegraphics[width=3.5in]{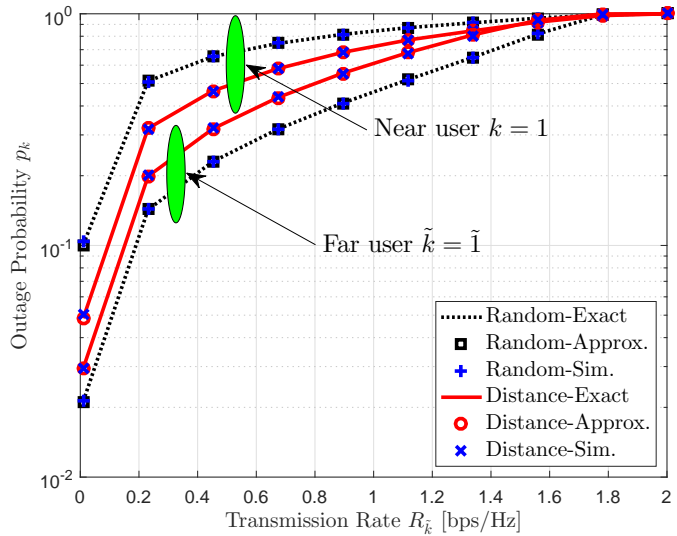}
  \caption{The average outage probability $p_{k}$ versus the transmission rate $R_{\tilde k}$ by setting $R_{ k}=2R_{\tilde k}$.}\label{fig:aorate}
\end{figure}

Fig. \ref{fig:aolambda} depicts the average outage probability $p_{k}$ against the intensity of BSs $\lambda_b$. Aside from the similar findings as Fig. \ref{fig:aorate}, it is evident that the average outage probability is irrelevant to the intensity of BSs under high $\lambda_b$, i.e., interference-limited regime. This is consistent with our analytical results. Nonetheless, the average outage performance declines if $\lambda_b$ falls below a threshold (e.g., $10^{-6}$ BSs/m${}^2$). Indeed, such observations seem to be counterintuitive since the inter-cell interference is alleviated as $\lambda_b$ decreases. However, the decrease of $\lambda_b$ also signifies the expansion of the size of Voronoi cell, on average, this dominant effect engenders severe path loss.
\begin{figure}
  \centering
  \includegraphics[width=3.5in]{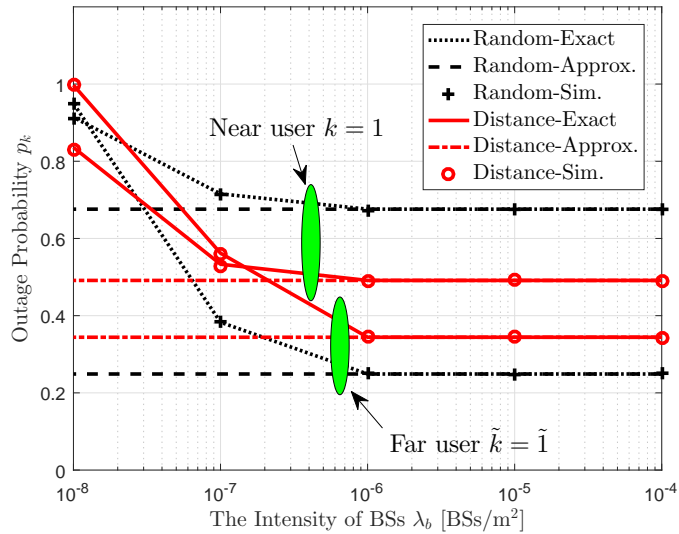}
  \caption{The average outage probability $p_{k}$ versus the transmission rate $R_{\tilde k}$ by setting $R_{ k}=2R_{\tilde k}=1$ bps/Hz.}\label{fig:aolambda}
\end{figure}
\subsection{Impacts of System Parameters}
Fig. \ref{fig:aocorr} illustrates the impact of the exponential correlation coefficient $\kappa$ on the conditional outage probability $p_{k|\mathcal H}$. It is worth mentioning that $\lambda_b=10^{-7}$ BSs/m${}^2$ is designated in this example to weaken the impact of interference so as to better reflect the influence of estimation error correlation. Clearly from Fig. \ref{fig:aolambda}, $\lambda_b=10^{-7}$ BSs/m${}^2$ can be chosen to get rid of interference-limited regime. It is shown that the correlation coefficient has a positive effect on the average outage probability $p_{k|\mathcal H}$ under low channel K factor ${\cal K}_{zk}$. On the contrary, the correlation coefficient impairs the outage performance under high ${\cal K}_{zk}$. Moreover, the conditional outage probability of the far user, i.e., $p_{k|{\cal H}}$, changes very slightly with respect to the correlation coefficient $\kappa$ under ${\cal K}_{zk}=20$dB. This is due to the fact that the channel estimation error diminishes as $\mathcal K_{zk}$ increases. As a consequence, the error correlation under high $\mathcal K_{zk}$ has a negligible impact on the outage probability compared to that of the interference, especially for the far user with larger path loss.

\begin{figure}
  \centering
  \includegraphics[width=3.5in]{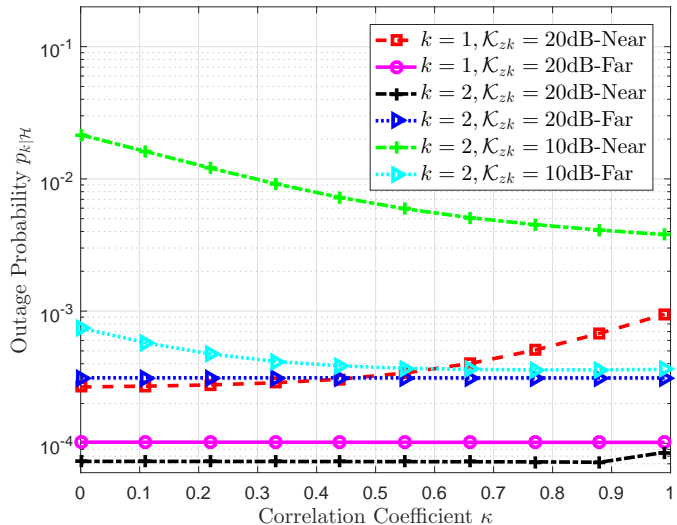}
  \caption{The conditional outage probability $p_{k|\mathcal H}$ versus the exponential correlation coefficient $\kappa$ with parameters $R_{ k}=2R_{\tilde k}=1$bps/Hz and $\lambda_b=10^{-7}$BSs/m${}^2$.}\label{fig:aocorr}
\end{figure}

As shown in Fig. \ref{fig:upp}, the conditional outage probability $p_{k|\mathcal H}$ of NOMA pair $k=1$ is plotted against the transmission rate $R_{\tilde k}$ under different channel K factor ${\mathcal K}_{zk}$. With the increase of ${\mathcal K}_{zk}$, the outage curves become steeper. In particular, the outage probability significantly increases to 1 for ${\mathcal K}_{zk}=50$ dB as $R_{\tilde k}$ grows from 1.73 bps/Hz to 1.74 bps/Hz. However, this is not beyond our expectation which can be elucidated as follows. As unveiled by the asymptotic outage analysis in Section \ref{sec:asy}, the upper bound of the transmission rate $R_{\tilde k}=1.737$ bps/Hz, which is in accordance with our observations in Fig. \ref{fig:upp}. In contrast, it is found from Theorem \ref{the:upp2} that $R_1$ and $R_2$ are upper bounded by 21.846 bps/Hz and 23.758 bps/Hz, respectively. Apparently, the outage performance of NOMA users are considerably restricted by the target rates of the far users, i.e., $R_{\tilde k}$.
\begin{figure}
  \centering
  \includegraphics[width=3.5in]{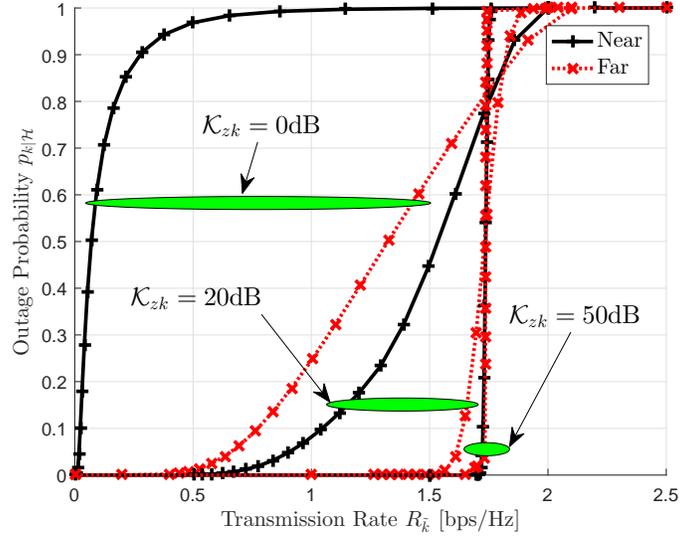}
  \caption{The conditional outage probability $p_{k|\mathcal H}$ versus the transmission rate $R_{\tilde k}$ by setting $R_{ k}=2R_{\tilde k}$ and $\lambda_b=0$~BSs/m${}^2$.}\label{fig:upp}
\end{figure}
\subsection{Optimal System Design}
Fig. \ref{fig:opt} compares the maximum achievable goodput of the proposed MIMO-NOMA enhanced scheme with those of three benchmarking schemes, including MIMO-OMA with precoding scheme \cite{park2012outage}, MIMO-OMA without precoding scheme and MIMO-NOMA without precoding scheme \cite{ding2016application,cui2018outage}. For comparative fairness, the coefficient $\beta_k^2$ is also used for the allocation of the orthogonal resources (time/bandwidth) in the OMA scheme. Additionally, it is worthwhile to note that the approximate expression of ${p_{k|{\cal H}}}$ is used to solve \eqref{eqn:prob_defsimpdec} to realize efficient and robust system design. It can be seen from Fig. \ref{fig:opt} that the proposed scheme performs the best particularly for a great difference between $d_k$ and $d_{\tilde k}$. This is due to the fact that the intrinsic nature of NOMA exploits the difference between channel gains to boost the spectral efficiency. Moreover, the precoding schemes can achieve a noticeable goodput gain over those without precoding. In addition, as the channel K factor ${\cal K}_{zk}$ increases (till no estimation error), the maximum achievable goodput increases up to an upper bound, which obviously depends on the severity of co-channel interference, i.e., $\lambda_b$.
\begin{figure}
  \centering
  \includegraphics[width=3.5in]{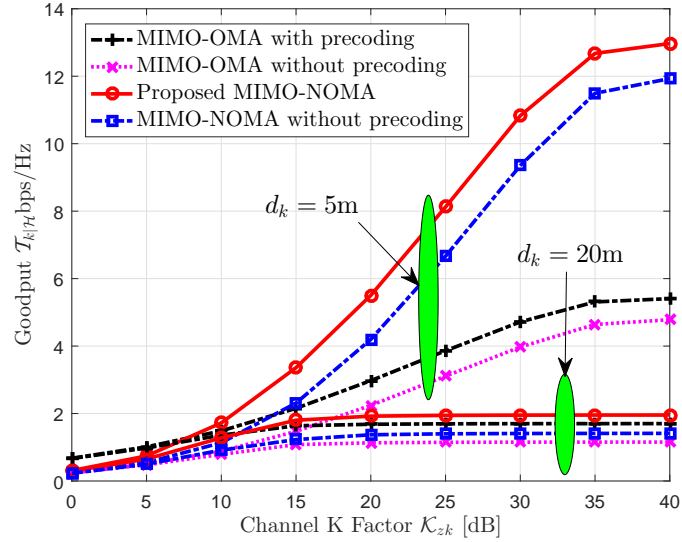}
  \caption{Performance comparison with the three baseline schemes in terms of the goodput ${\cal T}_{k|\mathcal H}$ versus the channel K factor ${\cal K}_{zk}$ with $\varepsilon=10^{-2}$.}\label{fig:opt}
\end{figure}

\section{Conclusions}\label{sec:con}
The performance of MIMO-NOMA enhanced SCNs has been thoroughly investigated by taking into account the assumption of imperfect CSI. The channel estimation error and the spatial randomness of BSs have been characterized by capitalizing on Kronecker correlation model and homogeneous PPP, respectively. The outage probabilities of MIMO-NOMA systems have been derived in compact form by considering two distinct NOMA grouping policies, including the random grouping and the distance-based grouping. It has been proved that the average outage probabilities are independent of the intensity of BSs in the interference-limited regime. Interestingly, it has been found that the outage performance is degraded under a sufficiently low intensity of BSs. Besides, the asymptotic analyses have been carried out to gain more insights into the outage behavior under high quality of CSI. Specifically, it has been shown that the target rates must be limited up to a bound to achieve an arbitrarily low outage probability in the absence of inter-cell interference. The outage expressions have empowered the maximization of the goodput by properly designing the precoding matrix, receiver filters and transmission rates. At last, the numerical results have also revealed that the estimation error correlation contributes to improve the outage performance under low quality of CSI. Whereas, opposite observations have been made under high quality of CSI. Furthermore, for intelligent resource allocation of NOMA enhanced SCNs, we plan to extend the machine-learning based approaches as in \cite{zhou2019reliable} and \cite{liao2019learning} to our future work.

\bibliographystyle{ieeetran}
\bibliography{manuscript_1}

\end{document}